\tikzset{elegant/.style={smooth,thick,samples=50,cyan}}
\tikzset{eaxis/.style={->,>=stealth}}
\newtheorem{definition}{Definition}
\newtheorem{lemma}{Lemma}
\newtheorem{theorem}{Theorem}
\begin{document}

\IEEEoverridecommandlockouts

\title{Utilization-Tensity Bound for
	Real-Time DAG Tasks under Global EDF Scheduling
}
\author{\IEEEauthorblockN{Xu Jiang$^1$, Jinghao Sun$^2$, Yue Tang$^1$, Nan Guan$^1$}
	~\\
	\IEEEauthorblockA{
		$^1$The Hong Kong Polytechnic University, Hong Kong\\			
		$^2$Northeastern University, China
	}} 


\maketitle

\begin{abstract}
%
Utilization bound is a well-known concept in real-time scheduling theory for sequential periodic tasks, 
which can be used both for quantifying the performance of scheduling algorithms and as efficient schedulability tests.
However, the schedulability of parallel real time task graphs depends on not only utilization, but also another parameter \emph{tensity}, the ratio between the longest path length and period. In this paper, we use \emph{utilization-tensity bounds} to better characterize the schedulability of parallel real-time tasks.
In particular, we derive \emph{utilization-tensity bounds} for parallel DAG tasks under global EDF scheduling, which facilitate significantly more precise schedulability analysis than the state-of-the-art analysis techniques based on capacity augmentation bound and response time analysis.
Moreover, we apply the above results to the federated scheduling paradigm to improve the system schedulability by choosing 
proper scheduling strategies for tasks with different workload and structure features. 
\end{abstract}

\section{Introduction}

Schedulability bound is a well-established concept in real-time scheduling theory, which can be used as not only a simple and practical way to test the schedulability of real-time task sets, but also a good quantitative metric to indicate the worst-case performance of different scheduling strategies and provide insights about their performance bottlenecks. The most well-known schedulability bound would be Liu and Layland’s \emph{utilization bound} for \emph{Rate-Monotonic} (RM) scheduling algorithm on single-processors developed in 1970’s \cite{C1973Scheduling}. Since then, various of schedulability bounds have been developed for different scheduling algorithms with different task and processing platform models. 

Today, multi-core processors are more and more widely used in real-time systems, to meet the rapidly increasing requirements in high performance and low power consumption. Software must be parallelized to fully utilize the computation power of multi-cores. Therefore, it requires to upgrade the classical real-time scheduling theory from sequential tasks to the parallel task setting. The Directed Acyclic Graph (DAG) task model is a general representation of parallel tasks. There have been increasing research interests on real-time scheduling and analysis of DAG tasks recently \cite{li2013outstanding, bonifaci2013feasibility,melani2017schedulability,baruah2014improved,baruah2012generalized,chwa2013global,jiang2016decomposition,jiang2017,sun2017scheduling}, but the field is still far from mature.

While utilization has been proved to be a suitable metric to examine the schedulability of sequential tasks, this is not the case for parallel DAG task systems. A DAG task system may be unschedulable by any scheduling approach with arbitrary low utilization, even when the system only consists of one DAG task. This is because the difficulty of scheduling a DAG task depends on not only the total workload, but also the structural constraint of the tasks’ workload. Previous work has shown that, in additional to utilization, another metric \emph{tensity}, the ratio between the longest path length in the graph and the deadline, also captures important feature of the DAG structure of tasks and plays an important role in their schedulability. Based on this observation, two types of schedulability bounds have been proposed and studied for DAG task systems: the \emph{capacity augmentation bound} and the \emph{utilization-tensity bound}.

\begin{figure} 
	\centering
	\includegraphics[height =5cm]{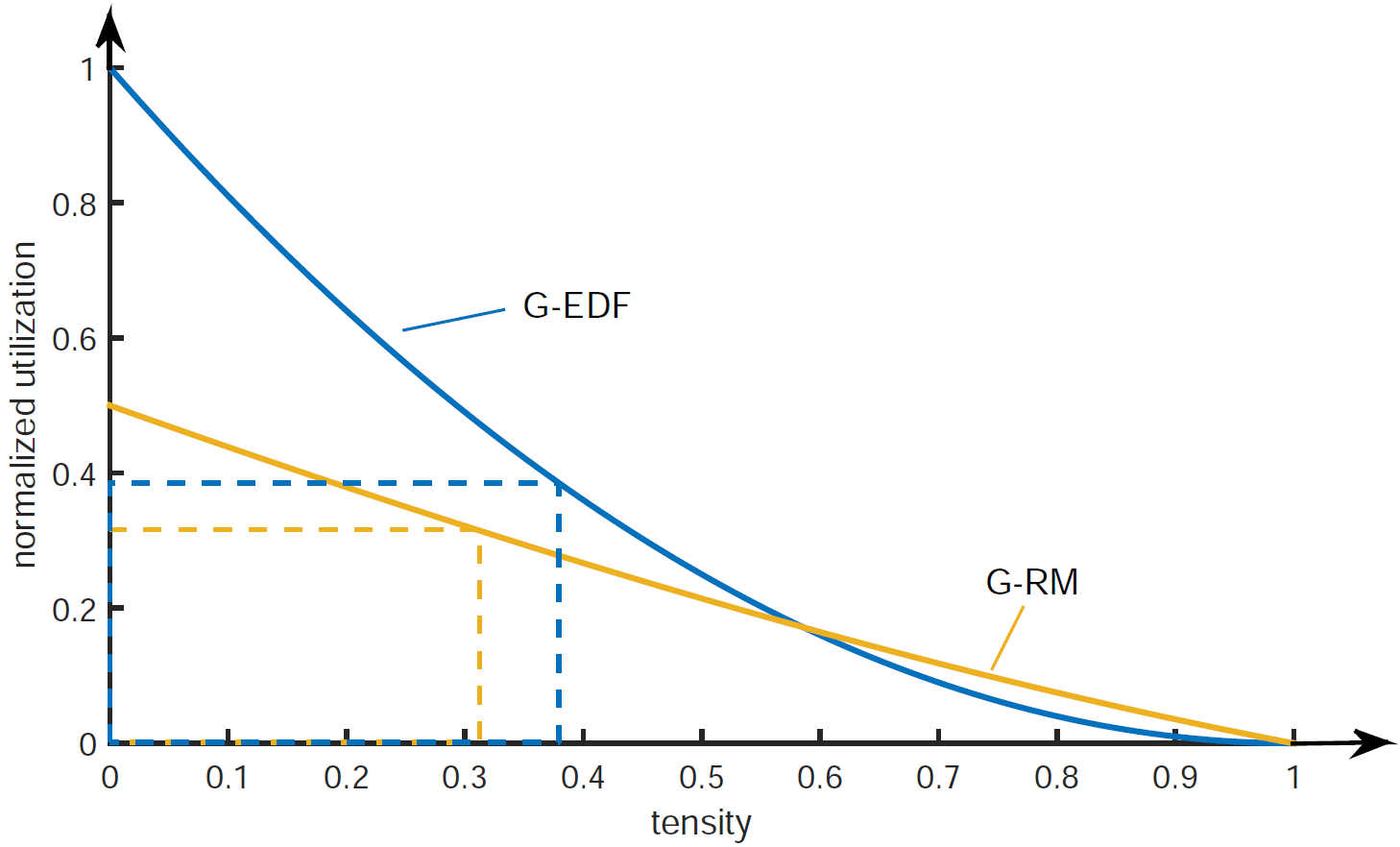} 
	\caption{Utilization-tensity bounds (solid lines) and capacity augmentation bounds (dash boxes) for G-RM and G-EDF.} \label{f:bounds}
\end{figure}

The difference between \emph{capacity augmentation bound} and \emph{utilization-tensity bound} is that the first one tests schedulability of a DAG task system by comparing both the utilization and tensity with a \emph{unified} threshold, while the second one compares utilization and tensity with its own threshold separately. Capacity augmentation bound is a little bit more abstract and simply represented by a number, so comparison of different capacity augmentation bounds is straightforward. The tensity-utilization bound is more accurate and provides more information about how each of utilization and tensity individually influences system schedulability. When directly used as schedulability test conditions, both capacity augmentation bound and utilization-tensity bound enjoy high efficiency and the 
nice property that no detailed DAG structure information is required (apart from the two abstract metrics utilization and tensity). This property makes the two bounds particularly suitable in design scenarios in which the DAG structure is unknown (e.g., in early-phase system design) or may change at run-time (e.g., for conditional DAG tasks where the actual workload released at run-time is input-dependent).

In this paper, we develop new analysis techniques for DAG task systems under Global Rate-Monotonic (G-RM) scheduling, which yields the following new schedulability bound results:

\begin{itemize}
	\item We derive the first utilization-tensity bound for G-RM scheduling. Previous work developed a utilization-tensity bound for Global Earliest-Deadline-First (G-EDF) scheduling, but no such result exists for G-RM. From graphical representations of their utilization-tensity bounds (Fig. \ref{f:bounds}), we can clearly see that while in general G-EDF has better schedulability than G-RM, for task systems with large tensity G-RM is actually superior to G-EDF.
	
	\item We improve the capacity bound of G-RM scheduling from the state-of-the-art value 3.73 \cite{li2014analysis} to 3.18.
	
\end{itemize}

We conduct experiments with randomly generated tasks, which show that our new results consistently outperform the state-of-the-art with a significant margin under different parameter settings.

\section{Model}
\label{sec:notation}

%
We consider a task set $\taskset$ that consists of $n$ sporadic tasks $\taskset=\{\tau_1, \tau_2,..., \tau_n\}$ to be executed on $m$ identical processors. Each task $\tau_i$ has a workload structure modeled by a Directed Acyclic Graph (DAG) $G_i = \langle V_i, E_i \rangle$, where $V_i$ is the set of vertices and $E_i$ is the set of edges in $G_i$. Each vertex $v  \in V_i$ is characterized by a worst-case execution time (WCET) $c(v)$. 
Each edge $(u, v) \in E_i$ represents the precedence relation between  $u$ and $v$, where $u$ is an \emph{immediate predecessor} of $v$, and $v$ is an \emph{immediate successor} of $u$.

A \emph{path} in $G_i$ is a sequence of vertices $\pi=\{v_1,v_2,...v_p\} $, where $v_{j}$ is an immediate predecessor of $v_{j+1}$ for each pair of consecutive elements $v_{j}$ and $v_{j+1}$ in $\pi$. We assume each DAG has a unique head vertex (with no predecessor) and a unique tail vertex (with no successor). This assumption does not limit the expressiveness of our model since one can always add a dummy head/tail vertex to a DAG having multiple entry/exit points. A \emph{complete path} in $G_i$ is a path $\pi$ where the first element in $\pi$ is the head vertex of $G_i$ and the last element in $\pi$ is the tail vertex in $G_i$. The length of $\pi$ is $len(\pi) =  \sum_{v \in \pi} c(v)$.   

 The \emph{volume} of $\tau_i$ is the total WCETs of all vertices of $\tau_i$:
 
 \[
 \sumC_i=\sum_{v\in V_i}c(v).
 \]
 
The \emph{critical path length} is the longest length among all paths in $G_i$:

\[
\critical_i = \max_{\forall \pi \in G_i} \{len(\pi) \}.
\]

Clearly, we have $\critical_i \leq \sumC_i$.

At run time, task $\tau_i$ releases an infinite sequence of jobs which inherit $\tau_i$’s DAG structure $G_i$. The minimum separation between the release times of two successive jobs is $T_i$. In this paper, we consider tasks with implicit deadlines, i.e., each task $\tau_i$ has a \emph{relative deadline} $T_i$. Let $J_i$ be a job of $\tau_i$, denoted as $J_i \in \tau_i$, then $r(J_i)$ and $f(J_i)$ denote $J_i$'s \emph{release time} and \emph{finish time}, respectively. $J_i$ must be finished before its \emph{absolute deadline} $d(J_i)=r(J_i)+T_i$. We call the time interval $[r(J_i),d(J_i)]$ the \textit{scheduling window} of $J_i$, whose length equals the relative deadline (i.e., period) $T_i$ of task $\tau_i$. A vertex $u$ of $J_i$ is \textit{eligible} at some time point if all its predecessors of the same job $J_i$ have finished their execution. 

The \emph{utilization} of task $\tau_i$ is defined as:
\[
u_i=\frac{C_i}{T_i}.
\]

Moreover, the \emph{total utilization} of the task set $\taskset$ is denoted as $U_{\sum} = \sum_{\tau_i\in \taskset} u_i$, and the \emph{normalized utilization} of $\taskset$ is defined as $U=U_{\sum}/m$. 

The \emph{tensity} of $\tau_i$ is defined as:

\[
\gamma_i=\frac{\critical_i}{T_i}.
\]

Moreover, the \emph{maximum tensity} among all tasks in the system is $\gamma_{max} = \max_{\tau_i \in \taskset}\{ \gamma_i \}$.

\begin{figure} [thb]
	\centering
	\includegraphics[height =4cm]{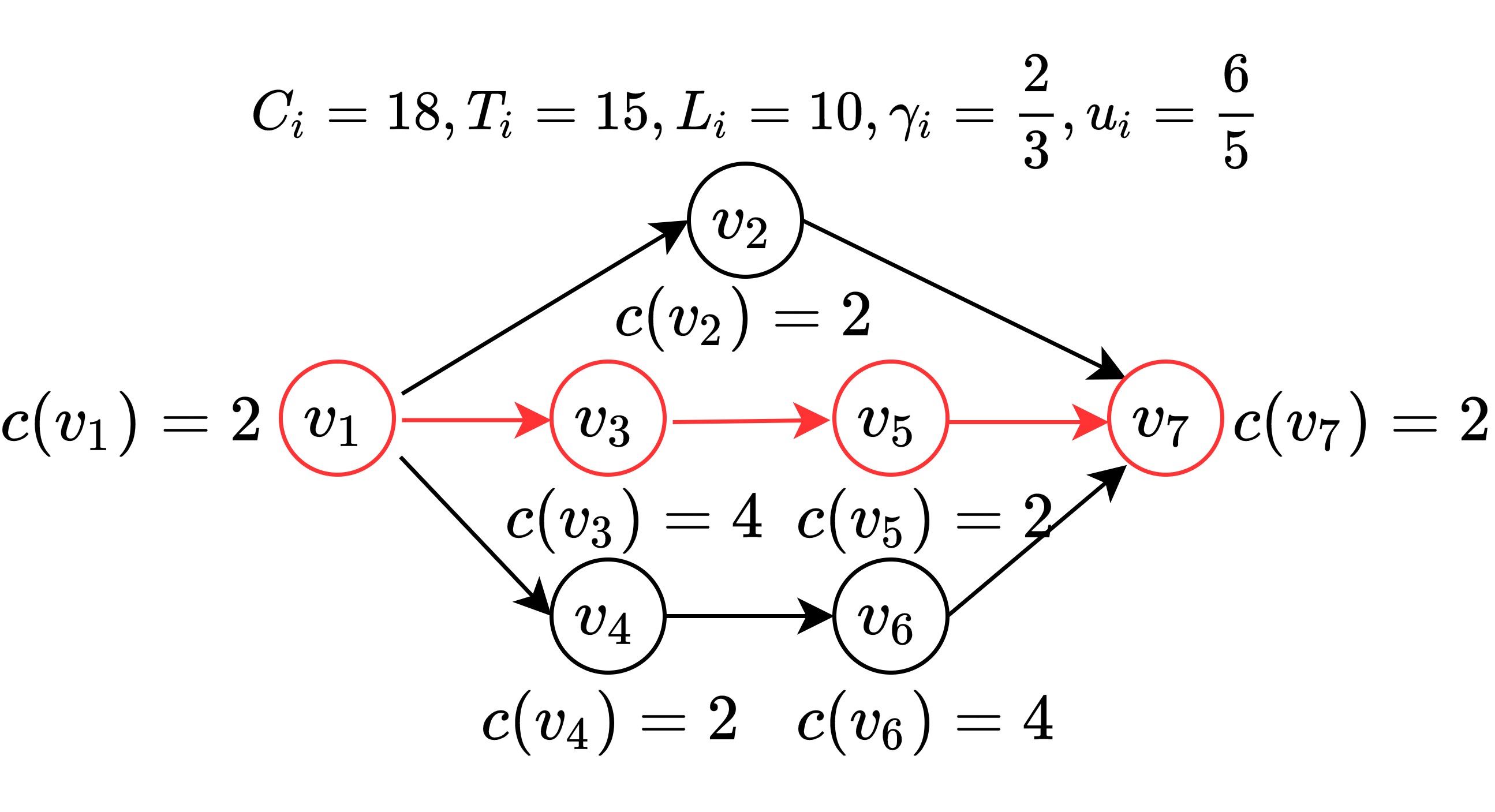} 
	\caption{An example DAG task $\tau_i$ with volume $C_i = 18$ and critical path length $L_i = 10$ (\revise{the critical path is marked in red}).}  \label{f:g1}
\end{figure} 

 For example, in Fig.\ref{f:g1}, the volume of $\tau_i$ is $C_i = 18$, and the utilization of $\tau_i$ is $u_i = \frac{6}{5}$. The critical path (marked in red)  starts from vertex $v_1$, goes through $v_3$, $v_5$ and ends at vertex $v_7$, so the critical path length of $\tau_i$'s DAG is $L_i = 2+4+2+2 = 10$. The tensity of $\tau_i$ is $\gamma_i = \frac{2}{3}$.
    
\subsection{Runtime Scheduling and Schedulability} 
 
The task set is scheduled by global priority-based scheduling algorithms on $m$ identical unit-speed processing processors. Preemption and migrations are both permitted. In this paper, we focus on Global Rate Monotonic (G-RM) scheduling. At any time, the G-RM scheduler processes the $m$ jobs with minimum period which are currently available. For tasks with implicit deadlines, G-RM scheduler is equivalent to Global Deadline Monotonic (G-DM) scheduler since $d(J_i)-r(J_i)=T_i$ holds for each job $J_i$ released by $\tau_i$.

 Without loss of generality, we assume the task system 
starts at time $0$ (i.e., the first job of the system is released at time $0$) and the time is discrete. The task set is schedulable if 
all jobs released by all tasks in $\taskset$ meet their deadlines.
Two necessary conditions must be satisfied for a task set to be schedulable\cite{li2013outstanding}:
%

\begin{lemma}
	\label{l:necessary}
 A task set $\tau$ is not schedulable (by any scheduler) unless the following conditions hold:
 \begin{align}
 		& \forall \tau_i \in \taskset : L_{i}  \leq T_i \label{eq:L<D} \\
 		&		U_{\sum}  \leq m \label{eq:U<m}
 \end{align}
 
%
%
%
%
\end{lemma}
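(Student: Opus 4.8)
The plan is to establish the lemma by contraposition on each condition separately: I will show that whenever one of the two inequalities fails, no scheduler (on $m$ unit-speed processors) can meet all deadlines, so each inequality is necessary for schedulability. The two conditions are independent in nature — one is a per-task structural constraint, the other a global capacity constraint — so they call for two distinct arguments.

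For condition (\ref{eq:L<D}), I would reason about a single job $J_i$ of $\tau_i$ in isolation. Let $\pi=\{v_1,v_2,\dots,v_p\}$ be a complete path realizing the critical path length, so that $\sum_{v\in\pi}c(v)=L_i$. Since each $v_{j+1}$ is an immediate successor of $v_j$, the precedence relation keeps $v_{j+1}$ ineligible until $v_j$ has finished; hence these $p$ vertices can never execute concurrently, no matter how many processors are available or which scheduling policy is used. A short induction on $j$ then yields that the finish time of $v_j$ is at least $r(J_i)+\sum_{k\le j}c(v_k)$, and in particular $f(J_i)\ge r(J_i)+L_i$. If $L_i>T_i$, this forces $f(J_i)>r(J_i)+T_i=d(J_i)$, so $J_i$ misses its deadline. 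Thus schedulability requires $L_i\le T_i$ for every $\tau_i$.

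For condition (\ref{eq:U<m}), I would use a demand-versus-supply counting argument over a long horizon. Assume all tasks release their first jobs synchronously at time $0$ and thereafter release jobs as early as permitted, i.e. exactly every $T_i$ time units. Because time is discrete and periods are integer-valued, let $H$ be a common multiple of $T_1,\dots,T_n$. On $[0,H]$ each task $\tau_i$ releases exactly $H/T_i$ jobs, every one of which has its absolute deadline at or before $H$, contributing demand $(H/T_i)\,C_i$. Summing over all tasks, the total work that must complete within $[0,H]$ equals $\sum_i (H/T_i)\,C_i = H\sum_i u_i = H\,U_{\sum}$, whereas $m$ unit-speed processors can supply at most $mH$ units of work in $[0,H]$. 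If $U_{\sum}>m$ then $H\,U_{\sum}>mH$, so demand strictly exceeds capacity and at least one job necessarily misses its deadline; hence schedulability requires $U_{\sum}\le m$.

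Neither step poses a deep obstacle — these are the DAG-aware analogues of the classical single-processor necessary conditions. The only points requiring care are (i) justifying rigorously that the critical-path vertices cannot overlap in time, which rests solely on the precedence relation and is therefore independent of both the scheduler and the processor count, and (ii) choosing the horizon $[0,H]$ so that every counted job has its deadline inside the interval, which the synchronous release together with a common-multiple choice of $H$ guarantees. Since either inequality can fail while the other holds, the two arguments are genuinely separate, as organized above.
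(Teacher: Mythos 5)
Your proposal is correct and follows essentially the same two-pronged argument the paper gives (informally, as this lemma is quoted from prior work): the critical path forces a response time of at least $L_i$ regardless of scheduler or processor count, and a workload-versus-capacity comparison shows $U_{\sum} > m$ forces a miss. Your version merely makes the paper's sketch rigorous, replacing its ``backlog grows unboundedly in the long term'' remark with a concrete synchronous-release, hyperperiod counting argument.
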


Cleary, if (\ref{eq:L<D}) is violated for some task, then
its period is doomed to be violated in the worst case, even if it is executed exclusively on sufficiently many processors. If (\ref{eq:U<m}) is violated, then in the long term the worst-case workload of the system exceeds the processing capacity provided by the platform, and thus the backlog will increase infinitely which leads to deadline misses.


%

 \section{Background and existing results}
 In this section, we introduce some concepts and existing results that will be useful in the rest of the paper. To better understand the behind intuitions, we also review the derivations of the capacity augmentation bound and utilization-tensity bound. 
 
 \subsection{Capacity Augmentation Bound}
%
%
%
%
%
%
 
%
%
%
%
%
%
%
%
%
 The capacity augmentation bound is defined as follows:
\begin{definition}[$\!$\cite{li2013outstanding}]
	\label{l:capacitybound}
	A scheduler $S$ has a \emph{capacity augmentation bound} of $\rho$ if it satisfies for any task set $\tau$:
		\begin{align*}
		 & \forall \tau_i \in \tau : L_{i} \leq T_i/\rho  ~\wedge~ U_{\sum}  \leq  m / \rho
		\label{eq: capacity test} \\
		\Rightarrow ~
		  & \tau \textrm{~is schedulable on~} m~ \textrm{unit-speed} \textrm{~processors}
		\end{align*}
\end{definition}
From the above definition, capacity augmentation bound can be directly used to decide the schedulability of a task set on unit-speed processors. 

The capacity augmentation bound can also be stated as the following lemma:
%
 \begin{lemma}[$\!$\cite{li2013outstanding}]
 	If a scheduler $S$ can schedule any task set $\tau$ on $m$ speed-$s$ processors satisfying
 	\begin{equation}	
	\forall \tau_i \in \tau : L_i \leq T_i  ~\wedge~ U_{\sum}\leq m
 	\label{eq: origincapacity}
 	\end{equation}
 	then $S$ has a capacity augmentation bound of $s$.  
 	 	\label{th:determincapa}
 \end{lemma}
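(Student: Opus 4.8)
The plan is to prove the claim through a speed/time-scaling equivalence between the two platforms. Assume that $S$ can schedule every task set obeying (\ref{eq: origincapacity}) on $m$ speed-$s$ processors; I must then show that any $\taskset$ with $\forall \tau_i\in\taskset: L_i\le T_i/s$ and $U_{\sum}\le m/s$ — that is, any task set passing the capacity augmentation test of Definition~\ref{l:capacitybound} with $\rho=s$ — is schedulable by $S$ on $m$ unit-speed processors, which is precisely what it means for $S$ to have a capacity augmentation bound of $s$.

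First I would introduce an auxiliary task set $\taskset'$ that keeps every DAG of $\taskset$ intact (so each $c(v)$, and hence each $C_i$ and $L_i$, is unchanged) but compresses every period to $T_i'=T_i/s$. The two hypotheses on $\taskset$ then translate into exactly the conditions (\ref{eq: origincapacity}) for $\taskset'$: the inequality $L_i\le T_i/s=T_i'$ recovers the critical-path condition, and $U'_{\sum}=\sum_i C_i/T_i'=s\,U_{\sum}\le m$ recovers the utilization condition. By the assumption of the lemma, $S$ therefore schedules $\taskset'$ successfully on $m$ speed-$s$ processors.

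The heart of the argument is the equivalence linking this speed-$s$ schedule of $\taskset'$ to a unit-speed schedule of $\taskset$. On a speed-$s$ processor a vertex of workload $c(v)$ runs for $c(v)/s$ time, so if I dilate the time axis of any speed-$s$ schedule of $\taskset'$ by the factor $s$ (mapping each instant $t$ to $s\,t$), every execution interval of length $c(v)/s$ stretches to length $c(v)$ — exactly the time that vertex consumes on a unit-speed processor — while each relative deadline $T_i'=T_i/s$ stretches back to $T_i$ and every release time and minimum separation scales consistently. The dilated schedule is thus a legal unit-speed schedule of $\taskset$ that still meets every deadline. Since a priority-driven scheduler such as $S$ bases its decisions only on the relative ordering of periods and the relative timing of events, both of which are preserved under a uniform dilation of the clock combined with the speed change, the schedule $S$ itself produces for $\taskset$ on unit-speed processors is precisely this dilation of the schedule it produces for $\taskset'$. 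Chaining the implications shows $\taskset$ is schedulable by $S$ on $m$ unit-speed processors, giving the capacity augmentation bound of $s$.

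The one delicate point I expect is making the scaling fully rigorous: justifying that $S$ is invariant under the combined time-dilation and speed change (immediate for G-RM, whose priorities depend only on the order of periods), and reconciling the dilation with the discrete-time assumption, since dividing periods by $s$ and stretching by $s$ may leave the integer grid. I would handle the latter by taking $s$ rational and refining the time granularity, or equivalently by carrying the equivalence out in continuous time.
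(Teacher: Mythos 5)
Your proof is correct in substance, but note that the paper itself offers no proof of this lemma: it is imported as background from \cite{li2013outstanding}, so there is no in-paper argument to compare against, and your write-up stands as a self-contained justification of the equivalence between Definition~\ref{l:capacitybound} and the speed-$s$ formulation. Two remarks on your construction. First, a slightly cleaner variant avoids the time dilation entirely (and hence the discrete-grid repair you flag at the end): instead of compressing periods to $T_i/s$, inflate every WCET to $s\,c(v)$ and keep the periods unchanged. The inflated set satisfies (\ref{eq: origincapacity}) exactly as yours does ($L_i' = s L_i \le T_i$ and $U'_{\sum} = s U_{\sum} \le m$), and on a speed-$s$ processor each inflated vertex occupies exactly $c(v)$ units of real time, so the speed-$s$ schedule of the inflated set and the unit-speed schedule of $\tau$ occupy literally the same time intervals---no dilation, no granularity issue, and the priority comparison is between identical period values rather than uniformly rescaled ones. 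Second, the point you call delicate---that $S$'s decisions must be invariant under the combined speed change and rescaling---is a genuine hypothesis hidden in the lemma as stated for an arbitrary scheduler $S$: it cannot be proved in general (a pathological scheduler whose decisions depend on absolute numerical values of periods or clock readings could violate it), but it holds for the priority-driven schedulers this paper cares about, since G-RM compares only the order of periods and G-EDF only the order of absolute deadlines, both preserved under uniform scaling. Flagging that assumption and discharging it for G-RM, as you do, is the right treatment rather than a gap.
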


From the above lemma, the capacity augmentation bound can also be used to quantify the relative performance of different scheduling approaches. Clearly, smaller capacity augmentation bound implies better schedulability. 

 Before going deep, we first introduce two useful concepts:
 
  \begin{definition}[$\!$\cite{li2014analysis}]
Given a task $\tau_i$,  $q_i(t,s)$ is the total work 
finished by $S_{\infty,s}$ on speed-$s$ processors in interval $[r_i, r_i + t]$ where $r_i$ is the release time of $\tau_i$, and $S_{\infty,s}$ is a hypothetical scheduling strategy that must schedule a task set on infinitely many speed-$s$ processors. 
  \end{definition}

   \begin{definition}[\cite{li2014analysis,bonifaci2013feasibility,baruah2014improved}] \label{def:maxLoad}
   Given a task $\tau_i$, $work(\tau_i,t,s)$ is defined by
%
   	\[
   	work(\tau_i,t,s) = 
   	\left\{\!\!
   	\begin{aligned}
   	& C_{i}-q_i(T_i-t,s), ~~~~~~~~~~ t\leq T_i \\
   	& \lfloor\frac{t}{T_i}\rfloor C_{i} \!+\! work(\tau_i,t\!-\!\lfloor\frac{t}{T_i}\rfloor T_i,s), ~ t\!>\!T_i.
   	\end{aligned}
   	\right.
   	\]
   \end{definition}

Intuitively, $work(\tau_i, t, s)$ denotes the maximum amount of
work from jobs with deadlines that fall within any interval of $I$ finished by schedule $S_{\infty,s}$ during $I$ over all job sequences that may be generated by $\tau_i$, where $|I|=t$.

 Fig.\ref{fig:work} illustrates the execution sequences of task $\tau_i$ in Fig. \ref{f:g1} scheduled by $S_{\infty,1}$ and $S_{\infty,2}$, from which we can see $q_i(2,1)=2$ and $q_i(2,2)=2\times1+2\times3=8$, and thus $work(\tau_i,13,1)=18-2=16$ and $work(\tau_i,13,2)=18-8=10$.

The best known results of capacity augmentation bounds for G-EDF and G-RM are derived indirectly by examining the schedulability of $\tau$ on speed-$s$. The related useful results are stated as follows.


  \begin{figure}
  	\centering
	\subfigure[$\tau_i$ on speed-1 processors]{\includegraphics[width=3.8in]{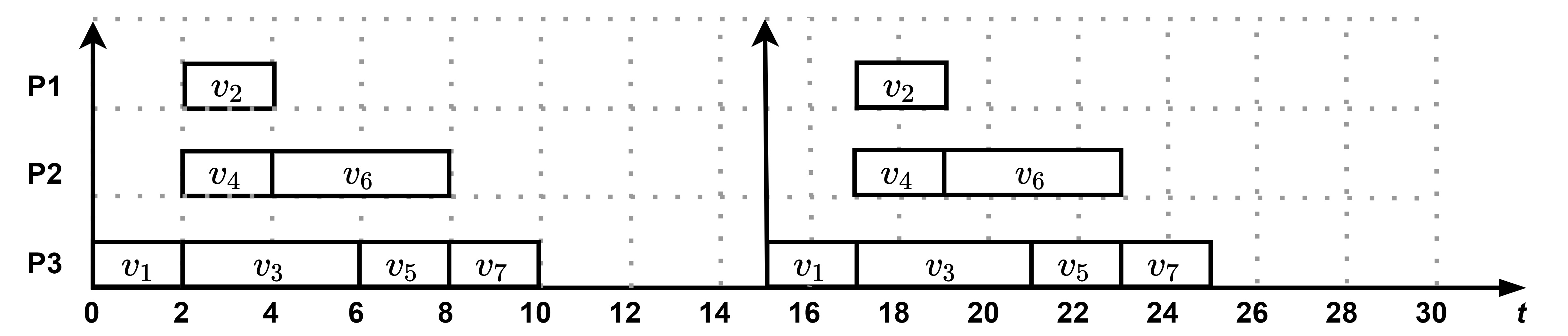}}
  	\hspace{0.02in}
	\subfigure[$\tau_i$ on speed-2 processors]{\includegraphics[width=3.8in]{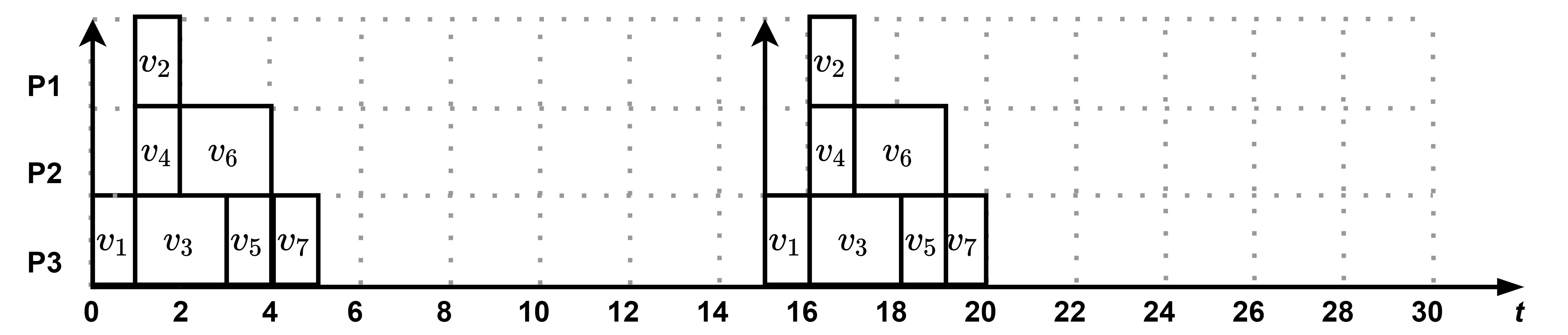}}
  	\caption{The execution of task $\tau_k$ on processors with different speeds.}
  	\label{fig:work}
  \end{figure}

 \begin{lemma}[\cite{li2014analysis}] \label{lem:Li}
 	Given any task $\tau_i$,  $\forall t>0$ and $s>1$, 
 	\[
 	\frac{work(\tau_i,t,s)}{t}\leq 
 	\left\{
 	\begin{aligned}
 	& \frac{u_{i}-1}{1-\frac{1}{s}}, &   u_i > s \\
 	& u_i,    &   0 \leq u_{i} \leq s.
 	\end{aligned}
 	\right.
 	\]
 \end{lemma}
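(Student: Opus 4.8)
The plan is to reduce everything to a single job of $\tau_i$ and then exploit the periodic recursion in the definition of $work$. The two structural facts I would establish first concern the greedy schedule $S_{\infty,s}$ acting on one job: (i) since processors are unlimited, every eligible vertex runs immediately, so the job completes exactly at time $L_i/s$ (the critical path length scaled by the speed); and (ii) at every instant before completion at least one vertex is running (take a minimal incomplete vertex in topological order; it is eligible and hence executing), so work accumulates at rate at least $s$. Together these give the key estimate $q_i(x,s)\ge\min(sx,\,C_i)$ for all $x\ge 0$, together with $q_i(x,s)=C_i$ for $x\ge L_i/s$. The first fact is what rules out a dangerous small-$t$ regime, while the rate bound is the workhorse for the main estimate.

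First I would treat the single-job window $t\le T_i$, where $work(\tau_i,t,s)=C_i-q_i(T_i-t,s)$. Plugging in the estimate above yields $work(\tau_i,t,s)\le\max\{0,\;C_i-s(T_i-t)\}$, and moreover $work(\tau_i,t,s)=0$ whenever $T_i-t\ge L_i/s$, i.e. whenever $t\le T_i-L_i/s$. For the case $u_i\le s$ I would show $C_i-s(T_i-t)\le u_i t$ directly: rearranging, this is $T_i(u_i-s)\le t(u_i-s)$, which holds since $u_i-s\le 0$ and $t\le T_i$; the complementary region gives $work=0$. For the case $u_i>s$ the naive line $C_i-s(T_i-t)$ is positive for every $t>0$ and overshoots $\beta t$, writing $\beta=\frac{u_i-1}{1-1/s}$, when $t$ is small, so here the completion fact is essential: the positive-$work$ region is confined to $t>T_i-L_i/s$, and since $L_i\le T_i$ (the necessary condition of Lemma~\ref{l:necessary}) we have $T_i-L_i/s\ge T_i(1-1/s)$. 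A short computation shows $C_i-s(T_i-t)\le\beta t$ is equivalent to $t\ge T_i(1-1/s)$, which is therefore guaranteed on the whole positive-$work$ region.

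Finally I would lift the single-job bounds to arbitrary $t>T_i$ through the recursion $work(\tau_i,t,s)=kC_i+work(\tau_i,t-kT_i,s)$ with $k=\lfloor t/T_i\rfloor$ and $t-kT_i\in[0,T_i)$. For $u_i\le s$, summing the per-job bound gives $work(\tau_i,t,s)\le kC_i+u_i(t-kT_i)=u_i t$. For $u_i>s$, I would combine the per-job bound $work(\tau_i,t-kT_i,s)\le\beta(t-kT_i)$ with $C_i=u_iT_i\le\beta T_i$, the latter holding precisely because $u_i\ge s$ forces $\beta\ge u_i$, which yields $work(\tau_i,t,s)\le k\beta T_i+\beta(t-kT_i)=\beta t$. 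Dividing by $t$ gives the claim in both branches.

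The step I expect to be the main obstacle is the case $u_i>s$ for a single job: the clean linear estimate coming from the rate-$s$ property alone is insufficient, and one must combine it with the completion-time fact $L_i\le T_i$ to discard the small-$t$ region where that estimate is loose. Getting the two thresholds to line up, namely verifying $T_i(1-1/s)\le T_i-L_i/s$, is the crux that makes the constant $\frac{u_i-1}{1-1/s}$ come out exactly.
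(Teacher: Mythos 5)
Your proof is correct. The paper itself imports this lemma from \cite{li2014analysis} without giving a proof, but your argument --- establishing $q_i(x,s)\ge\min(sx,\,C_i)$ from the rate-$s$ and completion-at-$L_i/s$ properties of $S_{\infty,s}$, proving the single-job case on $[0,T_i]$ (using $L_i\le T_i$ to confine the positive-work region exactly where the linear bound $\frac{u_i-1}{1-1/s}t$ takes over), and then lifting through the recursion in Definition~\ref{def:maxLoad} --- is essentially the same approach as the original proof in that reference, and it mirrors the paper's own proof technique for the unit-speed analogue (Lemma~\ref{lem:loadbigt}), which uses the identical split $t=t_1+t_2$ and the same two bounds on $q_i$.
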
 
%
 \begin{lemma}[\cite{li2014analysis}]\label{lem:LiGEDF}
 	A task set $\tau$ is schedulable by GEDF on $m$ speed-$s$ processors ($s > 1$) if $\forall \tau_i \in \tau : L_{i} \leq T_i$, and
 	\begin{equation}
 	\forall t>0: \sum_{\forall \tau_i \in \tau} work(\tau_i, t, s)\leq (s \times m-(m-1))\times t.
 	\label{eq: Li-speed}
 	\end{equation}	
 \end{lemma}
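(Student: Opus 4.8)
The plan is to argue by contradiction, following the standard ``busy window plus critical path'' template for global schedulers. Suppose the two conditions hold but $\taskset$ is \emph{not} schedulable by G-EDF on $m$ speed-$s$ processors, and let $J_k$ be the \emph{first} job that misses its deadline, released at $r_k$ with absolute deadline $d_k=r_k+T_k$. Since $J_k$ is the first miss, every job with deadline at or before $r_k$ has already completed, so the only work processed inside the window $[r_k,d_k]$ that has deadline $\le d_k$ stems from jobs whose deadline lies in $(r_k,d_k]$, an interval of length exactly $T_k$. I would call an instant $t\in[r_k,d_k]$ \emph{busy} when all $m$ processors run deadline-$\le d_k$ work and \emph{non-busy} otherwise, and write $t_b$ and $t_{nb}=T_k-t_b$ for the total busy and non-busy durations.

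First I would bound $t_{nb}$ by a remaining-critical-path argument. Because G-EDF is work-conserving and $J_k$ is still pending, at a non-busy instant some processor is free, so every eligible deadline-$\le d_k$ vertex is running; in particular all source vertices of the unfinished part of $J_k$ execute. Hence the longest remaining path of $J_k$ shrinks at rate $s$ throughout the non-busy time, and as it starts at $L_k$ we obtain $t_{nb}\le L_k/s$. Letting $\work$ be the total deadline-$\le d_k$ work the schedule performs in $[r_k,d_k]$, the processors deliver rate $sm$ while busy and at least rate $s$ (one critical-path vertex) while non-busy, so
\begin{align*}
\work &\ge sm\,t_b + s\,t_{nb} = s\bigl(mT_k-(m-1)t_{nb}\bigr) \\
&\ge smT_k-(m-1)L_k \ge \bigl(sm-(m-1)\bigr)T_k,
\end{align*}
where the last two steps use $t_{nb}\le L_k/s$ and the hypothesis $L_k\le T_k$.

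Next I would upper-bound the same $\work$ by the work function. The key ingredient is a domination property of the idealized schedule $S_{\infty,s}$: with infinitely many processors every job advances at its critical-path limit, so by every time instant $S_{\infty,s}$ has finished at least as much of each task's jobs as the real $m$-processor schedule. Thus, per task, the deadline-$\le d_k$ work our schedule completes in the window is at most what $S_{\infty,s}$ completes, namely $work(\tau_i,T_k,s)$ (Definition~\ref{def:maxLoad}). I would then extract the strictness needed for the contradiction directly from the miss: the schedule processes strictly less than $C_k$ of $J_k$'s own volume, whereas the $\tau_k$-term $work(\tau_k,T_k,s)=C_k$ already counts the full volume, so summing over the tasks gives $\work<\sum_{\tau_i\in\taskset}work(\tau_i,T_k,s)$. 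By hypothesis~(\ref{eq: Li-speed}) with $t=T_k$ the right-hand side is at most $(sm-(m-1))T_k$, which contradicts the lower bound $\work\ge(sm-(m-1))T_k$ just obtained.

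The step I expect to be the main obstacle is making the upper bound $\work\le\sum_i work(\tau_i,T_k,s)$ fully rigorous. This needs (i) the $S_{\infty,s}$ domination lemma; (ii) the EDF observation that only deadline-$\le d_k$ jobs can interfere with $J_k$---a job released near the end of the window has deadline beyond $d_k$ and is therefore lower priority---so that carry-in is faithfully captured by taking the deadline window of length exactly $T_k$; and (iii) the bookkeeping that converts ``work executed in a time window'' into ``work of jobs whose deadlines fall in a window,'' which is precisely what Definition~\ref{def:maxLoad} encodes. The critical-path bound $t_{nb}\le L_k/s$ is conceptually easy but subtle in one respect: ``non-busy'' must be read as \emph{fewer than $m$ processors busy with deadline-$\le d_k$ work}, so that work-conservation genuinely forces all of $J_k$'s ready vertices onto processors and the remaining path length truly decreases.
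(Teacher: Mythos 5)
Your skeleton (contradiction on the first deadline miss $J_k$, busy/non-busy split, critical-path bound on non-busy time, work-function upper bound) is the right family of argument: it is the template of \cite{li2014analysis}, which this paper does not reprove for this lemma but adapts to G-RM in Lemma~\ref{l: keylemma}. Your lower bound $\work \ge (sm-(m-1))T_k$ on the deadline-$\le d_k$ work done in $[r_k,d_k]$ is correct. The genuine gap is exactly the step you flagged: the claim that this work is at most $\sum_{\tau_i}work(\tau_i,T_k,s)$, justified by domination of $S_{\infty,s}$. Domination goes in the \emph{wrong} direction for that purpose: it says $S_{\infty,s}$ has completed at least as much of every job as the real schedule by every instant, hence a carry-in job (released before $r_k$, deadline inside the window) can have \emph{more} remaining work at $r_k$ under the real schedule than under $S_{\infty,s}$. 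Definition~\ref{def:maxLoad} credits such a job with the amount $q_i(\cdot,s)$ that $S_{\infty,s}$ would have executed before the interval, so for a task with $T_i>T_k$ one has $work(\tau_i,T_k,s)=C_i-q_i(T_i-T_k,s)<C_i$; but under G-EDF that carry-in job can be delayed arbitrarily before $r_k$ by earlier-deadline work (delay is not a miss, so this is consistent with $J_k$ being the first miss), in which case the real schedule executes up to $C_i$ units of it inside the window, strictly exceeding $work(\tau_i,T_k,s)$. So the per-task bound is false in general, and no direct appeal to domination can rescue it.

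The repair used in \cite{li2014analysis}, and mirrored in the proof of Lemma~\ref{l: keylemma} here, is structural: one does not work on $[r_k,d_k]$ at all, but extends the interval backwards to $I^*=[t^*,d_k)$, where $t^*$ is the \emph{latest} time such that the real schedule has processed at least as much of every vertex of every job as $S_{\infty,s}$ at all earlier times. Then the aggregate inequality $A^{\jobset,I^*}<\work^{\jobset,I^*}$ holds (the real schedule is ahead at $t^*$ and strictly behind at $d_k$, since $S_{\infty,s}$ finishes $J_k$ but the real schedule does not), and the work-function upper bound is applied to the work of $S_{\infty,s}$ over $I^*$, where it is valid essentially by Definition~\ref{def:maxLoad}; the real schedule's work is never bounded directly. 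This repair forces a second change you would also need: on the extended interval your critical-path argument for $t_{nb}\le L_k/s$ no longer applies, because $J_k$ is not even released during $[t^*,r_k)$. Instead one shows by induction over the non-busy sub-intervals that the real schedule keeps pace with an idealized schedule (the role played by $S_{\infty,\gamma_k}$ in Lemma~\ref{l: keylemma}), so that too much non-busy time would force $J_k$ to meet its deadline, a contradiction. Both halves of the argument must live on the same interval $I^*$, so the missing piece is not a bookkeeping detail but the $t^*$ construction itself.
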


 \begin{lemma}[\cite{li2014analysis}]\label{lem:LIGRM}
	A task set $\tau$ is schedulable by G-RM on $m$ speed-$s$ processors ($s > 1$) if $\forall \tau_i \in \tau : L_{i} \leq T_i$, and
	\begin{equation}
	\forall t>0: \sum_{\forall \tau_i \in \taskset} work(\tau_i, t, s)\leq \frac{(s \times m-(m-1))\times t}{2}.
	\label{eq: Li-speed2}
	\end{equation}	
\end{lemma}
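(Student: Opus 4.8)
The plan is to argue by contradiction following the standard ``problem job'' paradigm. Suppose the hypothesis holds but $\tau$ is not schedulable by G-RM on $m$ speed-$s$ processors, and let $J_k$ be the deadline-missing job with the earliest deadline, released at $r_k$ with deadline $d_k = r_k + T_k$. Because $J_k$ is the first miss, every job of priority at least that of $J_k$ (i.e.\ every job of a task $\tau_i$ with $T_i \le T_k$) whose deadline precedes $d_k$ has already completed in time. I would work over the scheduling window $[r_k, d_k]$ and classify each time instant as either \emph{progressing} --- an instant at which some processor is idle, so that by work-conservation the frontier vertex of the longest remaining path of $J_k$ (which is always eligible, since any unfinished predecessor would yield an even longer remaining path) is executing --- or \emph{blocking} --- an instant at which all $m$ processors execute work of priority $\ge J_k$.

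The first key step is a Graham-style counting argument. Since the longest-remaining-path frontier advances at rate $s$ during every progressing instant, the total progressing time is at most $L_k/s$; and because $J_k$ does not finish inside the length-$T_k$ window, the blocking time is at least $T_k - L_k/s$. Adding the $m$ processors' output during blocking instants ($ms$ per unit time) to the critical-path output during progressing instants ($s$ per unit time), and using $L_k \le T_k$, I expect the total priority-$\ge J_k$ work executed inside $[r_k,d_k]$ to be at least
\[
ms\Bigl(T_k-\tfrac{L_k}{s}\Bigr) + s\cdot\tfrac{L_k}{s}
= msT_k - (m-1)L_k \ \ge\ \bigl(sm-(m-1)\bigr)T_k ,
\]
which matches the numerator of the claimed bound.

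The second step bounds this executed interfering work from above. Only jobs of tasks $\tau_i$ with $T_i \le T_k$ that overlap the window contribute, and by comparing the actual schedule with $S_{\infty,s}$ the demand each $\tau_i$ can impose over a window is controlled by $work(\tau_i,\cdot,s)$, exactly as in the G-EDF analysis (Lemma~\ref{lem:LiGEDF}). The essential departure of G-RM is that, under fixed priority, interference is governed by \emph{period} rather than deadline: a higher-priority \emph{carry-in} job released before $r_k$ but still active must be charged together with the body jobs released inside the window. I would show that this carry-in effect at most doubles each task's contribution, giving executed interfering work at most $2\sum_{\tau_i} work(\tau_i,T_k,s)$; intuitively the worst case adds one extra overlapping job on top of the $\lfloor T_k/T_i\rfloor$ body jobs, and since $T_i \le T_k$ this ``$+1$'' job is dominated by the body contribution.

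Chaining the two bounds yields $\bigl(sm-(m-1)\bigr)T_k \le 2\sum_{\tau_i} work(\tau_i,T_k,s)$, contradicting the hypothesis instantiated at $t = T_k$. I expect the carry-in accounting of the third step to be the main obstacle: turning the heuristic ``one extra overlapping job'' into a rigorous factor-$2$ bound requires care with the partial-period term of the speed-$s$ work function and with the precise set of jobs that can execute in the window, rather than the loose ``number of overlapping jobs'' estimate. By contrast, the progressing/blocking dichotomy and the eligibility of the longest-remaining-path frontier are standard and should go through routinely.
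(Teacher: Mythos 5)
The part of your argument that is solid is the Graham-style lower bound: the progressing/blocking dichotomy (once you fold in the instants where all processors are busy but some run lower-priority work, at which the frontier of $J_k$ must be executing under G-RM) correctly yields at least $msT_k-(m-1)L_k\ge(sm-(m-1))T_k$ units of priority-$\ge J_k$ work executed in $[r_k,d_k)$. The gap is the upper-bound half, and it is not the routine bookkeeping you hope it is. Note first that the paper does not prove this lemma itself (it is imported from \cite{li2014analysis}), but the cited proof's machinery is mirrored at unit speed in the paper's own Lemma~\ref{l: keylemma}, and it differs from your plan in two essential ways. First, it never works over the scheduling window $[r_k,d_k)$: it works over $I=[t^*,d_k)$, where $t^*$ is the latest time before which the real schedule has processed at least as much of \emph{every} job as $S_{\infty,s}$. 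This anchor is exactly what licenses bounding \emph{executed} work by the functions $work(\tau_i,\cdot,s)$ of Definition~\ref{def:maxLoad}: at $r_k$ the real schedule can be strictly behind $S_{\infty,s}$ (a carry-in job that $S_{\infty,s}$ finished long before $r_k$ can still have essentially all of $C_i$ left for G-RM to run inside the window, while contributing nothing to the $S_{\infty,s}$-work in that interval), so your step ``the demand each $\tau_i$ can impose over a window is controlled by $work(\tau_i,\cdot,s)$, exactly as in the G-EDF analysis'' is unjustified --- the G-EDF analysis (Lemma~\ref{lem:LiGEDF}) itself only holds because of the $t^*$ construction. Second, the factor $2$ does not come from ``carry-in doubles each task's contribution.'' It comes from the observation that under fixed priority every interfering job that runs in $I$ has its \emph{deadline} in $[t^*,d_k+T_k)$, an interval of length $|I|+T_k\le 2|I|$; the interfering work is then at most $\sum_i work(\tau_i,2|I|,s)$, and instantiating the hypothesis at $t=2|I|$ gives $\sum_i work(\tau_i,2|I|,s)\le(sm-(m-1))|I|$, the contradiction. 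The ``$/2$'' pays for doubling the \emph{interval length}, not for doubling per-task terms.

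Moreover, your substitute claim --- executed interfering work in $[r_k,d_k)$ is at most $2\sum_i work(\tau_i,T_k,s)$ --- cannot be established by the ``one extra overlapping job'' counting you propose. Take $T_k=T_i+b$ with $b>0$ small and $s>1$: then $work(\tau_i,T_k,s)=C_i+work(\tau_i,b,s)$ and $work(\tau_i,b,s)=C_i-q_i(T_i-b,s)=0$, because $S_{\infty,s}$ finishes each job of $\tau_i$ within $L_i/s\le T_i-b$; so $2\,work(\tau_i,T_k,s)=2C_i$. Yet three jobs of $\tau_i$ --- one carry-in plus two released inside the window --- can all execute in $[r_k,d_k)$, so job counting only yields $3C_i>2C_i$. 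Closing that gap would force exactly the finer, schedule-dependent reasoning you defer (the carry-in job must meet its own deadline, the last-released job can execute only briefly, etc.), and it is unclear it closes at all: notice that your route invokes the hypothesis only at $t=T_k$, so if it worked it would prove a strictly stronger statement than the cited lemma, whereas the actual proof must exhibit the violation at $t=2|I|$ with $|I|=d_k-t^*$ potentially much larger than $T_k$ --- which is precisely why the hypothesis is quantified over all $t>0$. So the proposal as sketched does not go through; the missing ingredients are the $t^*$ anchor and the deadline-interval extension, not a tidier carry-in count.
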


By combining Lemma \ref{lem:Li} with \ref{lem:LiGEDF} and \ref{lem:LIGRM} respectively, we can get

 \begin{lemma}
 Any task set $\tau$ satisfying $\forall \tau_i \in \tau : L_{i} \leq T_i$, and $U_{\sum}\leq m$ is schedulable under G-EDF and G-RM on $m$ processors of speed- $\frac{3+\sqrt{5}}{2}$ and $2+\sqrt{3}$ respectively.
\label{th:GEDFBOUND}	
 \end{lemma}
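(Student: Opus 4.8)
The plan is to verify, separately for G-EDF and G-RM, that the two hypotheses $\forall \tau_i\in\tau: L_i \le T_i$ and $U_{\sum}\le m$ force the work-based schedulability conditions of Lemma~\ref{lem:LiGEDF} and Lemma~\ref{lem:LIGRM} to hold at the claimed speeds. The condition $L_i \le T_i$ is already one of the two preconditions of those lemmas, so nothing is needed there; the real content is to establish the aggregate work inequality for every $t>0$. First I would invoke Lemma~\ref{lem:Li} to replace each term $work(\tau_i,t,s)$ by an upper bound linear in $t$: writing $g(u_i)$ for the per-task rate bound, Lemma~\ref{lem:Li} gives $work(\tau_i,t,s)\le g(u_i)\,t$ with $g(u)=u$ when $u\le s$ and $g(u)=\frac{s(u-1)}{s-1}$ when $u>s$. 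Summing over tasks reduces the problem, for both schedulers, to a single static inequality in the utilization vector, with no remaining dependence on $t$ or on DAG structure.

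Second, I would bound $\sum_i g(u_i)$ subject only to $u_i\ge 0$ and $\sum_i u_i \le m$. The function $g$ is continuous, increasing and piecewise linear with slope $1$ on $[0,s]$ and slope $\frac{s}{s-1}>1$ on $(s,\infty)$, hence convex. Maximizing a sum of identical convex increasing functions over the simplex $\{\,\sum_i u_i \le m,\ u_i\ge 0\,\}$ is attained at an extreme point, i.e.\ by concentrating all utilization in a single task; a short direct check (splitting into $k$ equal high-utilization tasks gives total $\frac{s}{s-1}(m-k)$, which decreases in $k$) confirms $k=1$ is worst. This yields $\sum_i g(u_i)\le g(m)$, which equals $\frac{s(m-1)}{s-1}$ in the regime $m>s$ that turns out to be binding.

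Third, it then suffices to check the resulting numeric inequalities. In the binding regime $m>s$ these read $\frac{s(m-1)}{s-1}\le sm-(m-1)$ for G-EDF and $\frac{s(m-1)}{s-1}\le \frac{sm-(m-1)}{2}$ for G-RM (the case $m\le s$ is immediate, since there the bound is merely $\sum_i g(u_i)\le m$). Rearranging, the coefficient of $m$ in the G-EDF slack is $\frac{s^2-3s+1}{s-1}$ and in the G-RM slack $\frac{s^2-4s+1}{2(s-1)}$; the critical speeds are exactly the larger roots $s=\frac{3+\sqrt5}{2}$ and $s=2+\sqrt3$ of $s^2-3s+1=0$ and $s^2-4s+1=0$. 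At these values the linear-in-$m$ term vanishes and the leftover constant is strictly positive, so each inequality holds with a uniform positive margin for every $m$; applying Lemmas~\ref{lem:LiGEDF} and \ref{lem:LIGRM} then closes the argument.

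I expect the main obstacle to be the second step: rigorously justifying that the worst case is a single maximally-loaded task. One must handle the piecewise definition of $g$ (a feasible configuration may mix tasks with $u_i\le s$ and $u_i>s$) together with the constraint $u_i\le m$, and argue that convexity plus monotonicity really push the maximizer to the vertex; the explicit equal-split computation makes this transparent but should be stated carefully. A secondary subtlety worth flagging is that the inequalities must hold for \emph{all} $m$, not merely in the limit $m\to\infty$ that pins down the constants, which is exactly why I would verify that at the critical speeds the $m$-dependent term is identically zero rather than only asymptotically non-negative.
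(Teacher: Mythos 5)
Your proof is correct and follows essentially the same route as the paper, which gives this result only as a one-line combination of Lemma~\ref{lem:Li} with Lemmas~\ref{lem:LiGEDF} and~\ref{lem:LIGRM}: you instantiate the per-task bound $work(\tau_i,t,s)\le g(u_i)t$, reduce to the worst case of a single task carrying all $m$ units of utilization, and verify that $s=\frac{3+\sqrt{5}}{2}$ and $s=2+\sqrt{3}$ are precisely the roots of $s^2-3s+1=0$ and $s^2-4s+1=0$ that make the $m$-dependent slack vanish while leaving a positive constant. Your convexity/extreme-point argument is a rigorous (and standard) justification of the utilization-concentration step that the paper and its cited source treat informally, so there is no gap.
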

The above lemma implies capacity augmentation bounds of $\frac{3+\sqrt{5}}{2}$ and  $2+\sqrt{3}$ for G-EDF and G-RM(on unit-speed processors) respectively.

\subsection{Utilization-Tensity Bound}

The drawback of capacity augmentation bound is that it gives the same threshold for both 
normalized utilization and tensity, i.e., a task set with small normalized
utilization but maximum tensity slightly exceeding capacity augmentation bound cannot pass the schedulability test even it is
easy to be scheduled. From this view, another metric named utilization-tensity bound which aims to have asymmetric thresholds for normalized utilization and tensity on unit-speed processors are derived in \cite{xuutilizationtensity}. Instead of deriving schedulability conditions on speed-$s$ processors and then transforming them to speed-$1$ processors, conditions to guarantee schedulability on unit-speed processors for G-EDF are directly derived by obtaining the \enquote{unit-speed version} of Lemma \ref{lem:Li} and Lemma \ref{lem:LiGEDF}:

\begin{lemma}[\cite{xuutilizationtensity}]\label{lem:pi}
	For any  task $\tau_i$ and $t>0$, it holds
	\begin{equation}
	\frac{work(\tau_i, t,1)}{t} \leq \left\{
	\begin{aligned}
	& \frac{u_{i}-\gamma_{i}}{1-\gamma_{i}}, &  u_{i}> 1 \\
	& u_{i},    &  0\leq u_{i} \leq 1.
	\end{aligned}
	\right.
	\end{equation}	
\end{lemma}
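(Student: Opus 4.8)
The plan is to prove the bound by analyzing the hypothetical schedule $S_{\infty,1}$ directly, mirroring how Lemma~\ref{lem:Li} is obtained for speed-$s$ but replacing the speed-based slack $(1-1/s)$ with the structural slack coming from the critical path length. The whole statement rests on one structural property of $S_{\infty,1}$: because a single job of $\tau_i$ runs greedily on infinitely many unit-speed processors, (i) its makespan is exactly $L_i$, so $q_i(x,1)=C_i$ for every $x\ge L_i$; and (ii) at every instant strictly before completion at least one processor is busy, so work accumulates at rate at least one and $q_i(x,1)\ge x$ for every $0\le x\le L_i$. First I would establish this property, proving (i) by the standard fact that greedy scheduling with no resource contention finishes each vertex at the length of the longest path ending at it, so the overall finish time is $L_i$; and proving (ii) by contradiction, since if no processor were busy at some step before $L_i$ while work remains, tracing back predecessors in the finite acyclic graph would expose an eligible, unstarted vertex, contradicting greediness.

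Next I would settle the base range $t\le T_i$, where $work(\tau_i,t,1)=C_i-q_i(T_i-t,1)$. Writing $x=T_i-t$, I would turn each target inequality into a lower bound on $q_i(x,1)$. The case $0\le u_i\le 1$ reduces to requiring $q_i(x,1)\ge u_i x$, which follows from (ii) together with $u_i\le 1$ (and from (i) when $x\ge L_i$). The case $u_i>1$ splits according to whether $x\ge L_i$ or $x<L_i$: when $x\ge L_i$, property (i) gives $work(\tau_i,t,1)=0$ and the bound is immediate; when $x<L_i$, I would substitute $q_i(x,1)\ge x$ and reduce to showing $C_i-x\le \frac{u_i-\gamma_i}{1-\gamma_i}(T_i-x)$. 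After normalizing by $T_i$ and setting $\hat x=x/T_i$, a short computation collapses this to the single condition $\hat x\le \gamma_i$, which holds precisely because $x<L_i=\gamma_i T_i$ in this subcase. This is exactly the point where tensity plays the role that processor-speed slack plays in Lemma~\ref{lem:Li}.

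Finally I would extend to arbitrary $t>T_i$ using the recursive definition $work(\tau_i,t,1)=\lfloor t/T_i\rfloor C_i+work(\tau_i,t',1)$ with $t'=t-\lfloor t/T_i\rfloor T_i\in[0,T_i)$. Applying the already-proved bound to the residual term $work(\tau_i,t',1)$ and rewriting $\lfloor t/T_i\rfloor C_i=\lfloor t/T_i\rfloor u_i T_i$, the desired inequality follows provided the per-period average $u_i$ does not exceed the claimed bound $B$ (namely $u_i$ itself when $u_i\le 1$, and $\frac{u_i-\gamma_i}{1-\gamma_i}$ when $u_i>1$). I would verify $u_i\le \frac{u_i-\gamma_i}{1-\gamma_i}$ algebraically, noting that it is equivalent to $u_i\ge 1$, so the step is valid in the relevant regime.

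I expect the main obstacle to be making property (ii) airtight in the discrete-time model, i.e. carefully arguing that at least one processor is busy during \emph{every} time step before the makespan, since the remainder is a case split plus elementary algebra. A secondary concern is boundary behavior: the bound is only meaningful for $\gamma_i<1$ (so that $1-\gamma_i>0$), so I would assume $L_i\le T_i$, which holds for any schedulable system by Lemma~\ref{l:necessary}, and treat the degenerate case $\gamma_i=0$ separately.
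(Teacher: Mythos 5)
Your proof is correct. Note that the paper never proves this lemma itself---it is imported by citation from \cite{xuutilizationtensity}---but your argument is essentially the same technique the paper uses to prove its own refinement, Lemma~\ref{lem:loadbigt}: bound $q_i(x,1)\ge x$ for $x<L_i$ (at least one processor is busy before the makespan of $S_{\infty,1}$) and $q_i(x,1)=C_i$ for $x\ge L_i$ (the makespan equals $L_i$), then case-split on $u_i\le 1$ versus $u_i>1$ and on $t\le T_i$ versus $t>T_i$; your algebraic reductions (to $\hat x\le \gamma_i$ in the heavy case, and to $u_i\le\frac{u_i-\gamma_i}{1-\gamma_i}\Leftrightarrow u_i\ge 1$ for the periodic extension) all check out.
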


\begin{lemma}[\cite{xuutilizationtensity}]\label{lem:7}
	A task set $\tau$ is schedulable by G-EDF on $m$ unit-speed processors if 
	\begin{equation}\label{eq:7-1}
	\forall t>0: \sum_{\tau_i \in \tau} work(\tau_i, t,1) \leq (m  - \gamma_{max} (m-1)) t 
	\end{equation}
	where $\displaystyle \gamma_{max} = \max_{\tau_i \in \tau} \{\gamma_{i}\}$.
\end{lemma}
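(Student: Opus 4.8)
The plan is to argue by contradiction, following the structure of the proof of Lemma~\ref{lem:LiGEDF} but working directly on unit-speed processors and charging the critical-path length through the tensity instead of through a speedup factor. Suppose the stated condition holds yet $\taskset$ is not schedulable under G-EDF. Among all jobs that miss their deadline, let $J_k$ be the one with the earliest absolute deadline $d_k = r_k + T_k$, and consider its scheduling window $[r_k, d_k]$ of length $T_k$. Because $J_k$ is the earliest deadline miss, every job that can interfere with $J_k$ under G-EDF has deadline no later than $d_k$; I call such work \emph{interfering work} and let $\work$ denote the total interfering work executed during $[r_k, d_k]$.

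The core of the argument is a path-tracing (critical-path) construction. Starting from a vertex of $J_k$ that is still unfinished at $d_k$, I would trace a chain of predecessors backwards: whenever the current vertex became eligible only after its last-finishing predecessor completed, I step to that predecessor. This yields a path prefix $\pi = \{v_1,\dots,v_\ell\}$ of $J_k$ whose vertices' \emph{active periods} (from becoming eligible to finishing) tile the whole window $[r_k, d_k]$. The key dichotomy I would establish is that at every instant of the window the current vertex of $\pi$ is either executing or eligible-but-blocked, and in the blocked case — since G-EDF is work-conserving and the blocked vertex has deadline $d_k$ — all $m$ processors must be busy with work of deadline at most $d_k$, i.e. interfering work. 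Letting $P$ and $B$ be the instants of the two kinds, $|P|$ is bounded by the executed length of $\pi$; because $J_k$ misses its deadline, that length is strictly below $L_k \le \gamma_k T_k$, so $|B| = T_k - |P| > T_k - L_k$.

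Counting processor-time spent on interfering work then yields a lower bound: each instant of $B$ contributes $m$ (all processors busy) and each instant of $P$ contributes at least one (the path vertex itself is interfering work), so $\work \ge m|B| + |P| = (m-1)|B| + T_k > (m-1)(T_k - L_k) + T_k$. Using $L_k = \gamma_k T_k \le \gamma_{max} T_k$, this would give $\work > (m - (m-1)\gamma_{max}) T_k$. On the other hand, the $S_{\infty,1}$-based definition of $work(\tau_i,t,1)$ upper-bounds the total interfering work executable in any window of length $T_k$ by $\sum_{\tau_i \in \taskset} work(\tau_i, T_k, 1)$, which by the hypothesis instantiated at $t = T_k$ is at most $(m - \gamma_{max}(m-1)) T_k$. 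These two bounds on $\work$ contradict each other, so no deadline miss can occur and $\taskset$ is schedulable.

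I expect the main obstacle to be making the path-tracing construction and its dichotomy fully rigorous: one must check that the backward chain is well defined, that the active periods of its vertices exactly cover $[r_k, d_k]$ with no gaps, and that at a blocked instant the G-EDF priority rule genuinely forces all $m$ processors onto deadline-$\le d_k$ work. A secondary technical point is justifying that \emph{executed} interfering work — including carry-in jobs released before $r_k$ but with deadline at most $d_k$ — is truly bounded by $\sum_{\tau_i \in \taskset} work(\tau_i, T_k, 1)$; this is exactly what the $q_i$/$work$ machinery is designed to supply, and I would invoke it in the same way as in Lemma~\ref{lem:LiGEDF}.
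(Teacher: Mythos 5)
Your path-tracing and processor-counting step over the scheduling window $[r_k,d_k]$ is sound (it is essentially the same argument as the paper's Lemma~\ref{l: schedulablework}), but the upper-bound side of your contradiction has a genuine gap: it is not true that the interfering work G-EDF executes in $[r_k,d_k]$ is bounded by $\sum_{\tau_i\in\tau} work(\tau_i,T_k,1)$. The function $work(\tau_i,t,1)$ is defined with respect to $S_{\infty,1}$, which for a carry-in job has already completed $q_i(T_i-t,1)$ units of it \emph{before} the window opens; G-EDF, having only $m$ processors, may have completed far less by $r_k$, so the carry-in work it still must execute inside the window can strictly exceed $work(\tau_i,T_k,1)$. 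For instance, take $\tau_1$ with $C_1=L_1=1$, $T_1=10$, and a window of length $T_k=2$ ending at $\tau_1$'s deadline: $work(\tau_1,2,1)=C_1-q_1(8,1)=0$, yet if G-EDF's processors were saturated by earlier-deadline work during $[0,8]$, the entire unit of $\tau_1$'s job executes inside the window as deadline-$\leq d_k$ interference. So the step you dismiss as a ``secondary technical point'' is in fact the crux, and as stated it is false; this is also betrayed by the fact that your argument invokes the hypothesis only at $t=T_k$, whereas the lemma deliberately quantifies it over all $t>0$.

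The repair is the window-extension argument that this paper itself spells out (for G-RM) in the proof of Lemma~\ref{l: keylemma}: distinguish whether the real scheduler has kept pace with $S_{\infty,1}$ on every vertex up to $r(J_k)$. If yes, the work $S_{\infty,1}$ does in $[r(J_k),d(J_k))$ exceeds what the real scheduler does there, and your counting applies. If no, move the left endpoint back to the latest instant $t^*$ at which the scheduler was still ahead of $S_{\infty,1}$ on every vertex, and derive the contradiction over $I^*=[t^*,d(J_k))$, whose length is some value $\geq T_k$ not known in advance --- this is where the hypothesis for general $t$ is consumed, since $\work^{\jobset,I^*}\leq\sum_i work(\tau_i,|I^*|,1)$ does hold by the definition of $work$ (all interfering jobs have deadlines in $I^*$). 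Note also that over $I^*$ your path-tracing lower bound no longer works ($J_k$ is not even released at $t^*$); one needs the separate rescaling argument of Lemma~\ref{l: keylemma}, which shows that if at least $\gamma_k|I^*|$ of $I^*$ had an idle processor, the scheduler would keep pace with $S_{\infty,\gamma_k}$, and $S_{\infty,\gamma_k}$ finishes $J_k$ by its deadline because $L_k/\gamma_k\leq T_k$ --- contradicting the deadline miss and forcing the all-processors-busy time to exceed $(1-\gamma_k)|I^*|$.
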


At last, by combining Lemma \ref{lem:pi} and Lemma \ref{lem:7}, asymmetric thresholds for normalized utilization and tensity, i.e., the utilization-tensity bound, to guarantee schedulability of G-EDF on unit-speed machines are derived.

\begin{lemma} [\cite{xuutilizationtensity}]\label{GEDF:utilization-tensity}
	A task set t is schedulable under G-EDF on $m$ unit-speed processors if
	\[
	\forall \tau_i \in \tau : L_i \leq T_i  ~\wedge~ U\leq (1-\gamma_{max})^2
	\]
	where $\displaystyle \gamma_{max} = \max_{\tau_i \in \tau} \{\gamma_{i}\}$.
\end{lemma}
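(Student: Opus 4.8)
The plan is to feed the per-task work bound of Lemma~\ref{lem:pi} into the schedulability test of Lemma~\ref{lem:7}. Concretely, under the hypotheses $\forall \tau_i : L_i \le T_i$ and $U \le (1-\gamma_{max})^2$ I want to verify the premise of Lemma~\ref{lem:7}, namely $\sum_{\tau_i \in \tau} work(\tau_i,t,1) \le \big(m-\gamma_{max}(m-1)\big)\,t$ for every $t>0$. Dividing through by $t$, it suffices to establish the $t$-independent inequality $\sum_{\tau_i \in \tau} \frac{work(\tau_i,t,1)}{t} \le m-\gamma_{max}(m-1)$, and Lemma~\ref{lem:pi} is exactly the tool that removes the dependence on $t$ by bounding each summand in terms of $u_i$ and $\gamma_i$ alone.

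First I would collapse the two-case bound of Lemma~\ref{lem:pi} into a single function of $u_i$. Writing $\gamma=\gamma_{max}$, observe that on the branch $u_i>1$ the bound $\frac{u_i-\gamma_i}{1-\gamma_i}$ is increasing in $\gamma_i$ (its $\gamma_i$-derivative is $\frac{u_i-1}{(1-\gamma_i)^2}>0$), so enlarging $\gamma_i$ to $\gamma$ only loosens it. Hence every summand is at most $g(u_i)$, where $g(x)=x$ for $0\le x\le 1$ and $g(x)=\frac{x-\gamma}{1-\gamma}$ for $x>1$. This $g$ is continuous at $x=1$ (both pieces equal $1$ there), nondecreasing, convex (its slope rises from $1$ to $\tfrac{1}{1-\gamma}$), and satisfies $g(0)=0$.

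The crucial step is to replace the sum over tasks by a single scalar evaluation. Because $g$ is convex with $g(0)=0$ it is superadditive, i.e. $g(a)+g(b)\le g(a+b)$, and iterating gives $\sum_{\tau_i\in\tau} g(u_i) \le g\big(\sum_{\tau_i\in\tau} u_i\big) = g(U_{\sum})$; this is the rigorous form of the intuition that the adversary does best by pouring the entire utilization budget onto one task. Since $U_{\sum}=mU\le m(1-\gamma)^2$ and $g$ is nondecreasing, $g(U_{\sum})\le g\big(m(1-\gamma)^2\big)$. The remaining computation is routine: in either branch one gets $g\big(m(1-\gamma)^2\big)\le m(1-\gamma) \le m(1-\gamma)+\gamma = m-\gamma(m-1)$, the middle step because $m(1-\gamma)^2\le m(1-\gamma)$ on the branch $x\le 1$ and, on the branch $x>1$, $\frac{m(1-\gamma)^2-\gamma}{1-\gamma}=m(1-\gamma)-\frac{\gamma}{1-\gamma}\le m(1-\gamma)$. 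Chaining the inequalities yields $\sum_{\tau_i\in\tau} \frac{work(\tau_i,t,1)}{t}\le m-\gamma_{max}(m-1)$ for all $t>0$, so Lemma~\ref{lem:7} certifies that $\tau$ is schedulable under G-EDF.

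I expect the main obstacle to be the reduction to the single-task case rather than the arithmetic. A naive term-by-term comparison fails because an individual $u_i$ is not bounded by $1$, so one cannot simply argue $g(u_i)\le c\cdot u_i$ with a uniform constant $c$; the convexity and induced superadditivity of $g$ are what make the aggregation go through, and some care is needed at the kink $x=1$ and at the boundary where $m(1-\gamma)^2\le 1$, so that the correct branch of $g$ is used throughout. Everything else is bookkeeping once $g$ has been set up with the properties above.
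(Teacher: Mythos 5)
Your proof is correct, and its skeleton is exactly the intended derivation: verify the premise of Lemma~\ref{lem:7} using the per-task bounds of Lemma~\ref{lem:pi}. (The paper only cites this lemma from prior work, but its own proof of the analogous G-RM statement, Lemma~\ref{l:basicbound}, follows this very pattern.) Where you diverge is the aggregation step. The paper's route bounds the sum term by term over the common denominator $1-\gamma_{max}$: for heavy tasks $\frac{u_i-\gamma_i}{1-\gamma_i}\leq\frac{u_i-\gamma_i}{1-\gamma_{max}}$, for light tasks $u_i=\frac{u_i-\gamma_{max}u_i}{1-\gamma_{max}}$, so after discarding the subtracted terms the whole sum is at most $\frac{U_{\sum}}{1-\gamma_{max}}$; then $U_{\sum}\leq m(1-\gamma_{max})^2$ gives $\frac{U_{\sum}}{1-\gamma_{max}}\leq m(1-\gamma_{max})\leq m-(m-1)\gamma_{max}$ and Lemma~\ref{lem:7} closes the argument. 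You instead establish superadditivity of the convex envelope $g$ and collapse the sum to $g(U_{\sum})$. Both are valid, and all of your individual steps check out (the monotonicity in $\gamma_i$, the convexity and superadditivity of $g$, and the two-case evaluation at $m(1-\gamma_{max})^2$); your bound is even slightly tighter, since $g(x)\leq\frac{x}{1-\gamma_{max}}$ pointwise, though the extra precision buys nothing for this particular threshold.

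One correction to your closing paragraph: the obstacle you identify is illusory. You claim that a naive term-by-term bound $g(u_i)\leq c\cdot u_i$ with a uniform constant $c$ must fail because individual $u_i$ may exceed $1$. In fact $c=\frac{1}{1-\gamma_{max}}$ works for every $u_i\geq 0$: on the light branch $u_i\leq\frac{u_i}{1-\gamma_{max}}$ trivially, and on the heavy branch $\frac{u_i-\gamma_{max}}{1-\gamma_{max}}\leq\frac{u_i}{1-\gamma_{max}}$. This uniform term-by-term comparison is precisely the paper's (simpler) route, so the convexity and superadditivity machinery, while sound, is not the essential ingredient you took it to be.
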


\section{Straightforward Results for G-RM}
In this section, we first extend existing techniques introduced in the last section to obtain a utilization-tensity bound for G-RM, which is further shown to be pessimistic due to its ignorance of some key properties produced by the scheduling behavior of G-RM. Then in the next section we develop new techniques and present tighter utilization-tensity bound and capacity augmentation bound for G-RM. 

It is nature to imply the techniques introduced in \cite{xuutilizationtensity}, to obtain a unit-speed version of Lemma \ref{lem:LIGRM}:

\begin{lemma}\label{lem:8}
	A task set $\tau$ is schedulable by G-RM on $m$ speed-$1$ processors if 
	\begin{equation}\label{eq:8-1}
	\forall t>0: \sum_{\tau_i \in \tau} work(\tau_i, t,1) \leq \frac{ (m  - \gamma_{max} (m-1)) t }{2}
	\end{equation}
	where $\displaystyle \gamma_{max} = \max_{\tau_i \in \tau} \{\gamma_{i}\}$.
\end{lemma}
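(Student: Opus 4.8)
The plan is to mirror the derivation of the unit-speed G-EDF result (Lemma \ref{lem:7}) from its speed-$s$ counterpart (Lemma \ref{lem:LiGEDF}), but starting instead from the speed-$s$ G-RM result (Lemma \ref{lem:LIGRM}); the factor $\tfrac12$ on the right-hand side, which is the only difference between the G-EDF and G-RM speed-$s$ conditions, is simply carried through. I would argue by contradiction: assume the task set satisfies (\ref{eq:8-1}) for every $t>0$ yet is not schedulable under G-RM on $m$ unit-speed processors. Let $J_k$ be the first job to miss its deadline, released at $r_k$ with absolute deadline $d_k = r_k + T_k$, and focus on the scheduling window $[r_k,d_k]$ of length $T_k$.

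First I would trace an unfinished chain $\lambda$ of vertices inside $J_k$, of total length at most $L_k$, such that at every instant of $[r_k,d_k]$ at which no vertex of $\lambda$ executes, all $m$ processors are occupied by work of priority no lower than $\tau_k$ (i.e.\ by higher-priority tasks $\tau_i$ with $T_i \le T_k$ together with $J_k$ itself). Let $A$ denote the total time $\lambda$ executes and $B = T_k - A$ the complementary busy time. The decisive step, and the place where the technique of \cite{xuutilizationtensity} enters, is to bound $A$: on a unit-speed processor the chain occupies exactly its own length, so $A \le L_k = \gamma_k T_k \le \gamma_{max}T_k$. This plays the role that the bound $L_k/s \le T_k/s$ (a $1/s$ fraction of the window) plays in the speed-$s$ analysis, and amounts to substituting $\gamma_{max}$ for $1/s$ throughout. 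Bounding the work executed in the window from below by $mB + A = mT_k - (m-1)A$ and using $A \le \gamma_{max}T_k$ gives a lower bound of $\bigl(m-\gamma_{max}(m-1)\bigr)T_k$ on the high-priority work that must be completed in $[r_k,d_k]$.

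Finally I would bound this same high-priority work from above through the $work(\tau_i,\cdot,1)$ function of Definition \ref{def:maxLoad}. Here the G-RM carry-in --- jobs of higher-priority tasks released before $r_k$ whose execution spills into the window --- prevents the clean bound $\sum_i work(\tau_i,T_k,1)$ available under G-EDF, and instead costs a factor of two, exactly the factor already isolated in Lemma \ref{lem:LIGRM}. Combining the two estimates yields $\sum_{\tau_i\in\tau} work(\tau_i,T_k,1) > \tfrac12\bigl(m-\gamma_{max}(m-1)\bigr)T_k$ (the inequality being strict because $J_k$ does not complete), which contradicts (\ref{eq:8-1}) evaluated at $t=T_k$. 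I expect the main obstacle to be reproducing this carry-in argument faithfully in the unit-speed, tensity-parameterized setting: one must verify that the factor-two interference bound underlying Lemma \ref{lem:LIGRM} survives intact when the speed parameter is removed and the critical-path budget is re-expressed via $\gamma_{max}$ rather than $1/s$, and that converting between executed work and processor-time (trivial at unit speed but not at speed $s$) does not disturb the $\tfrac12$.
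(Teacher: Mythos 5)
Your overall skeleton---first deadline miss, chain/busy-time decomposition giving the lower bound $\bigl(m-(m-1)\gamma_{max}\bigr)T_k$, then an upper bound on interfering work via $work(\cdot,\cdot,1)$---is the right spirit; note that the paper itself omits the proof of this lemma (it defers to the scaling technique of \cite{xuutilizationtensity}), and the closest fully worked-out argument in the paper is the machinery of Lemma~\ref{l: keylemma} and Lemma~\ref{l:maximumwork}. However, your handling of the factor $\tfrac{1}{2}$ is a step that would fail. In Lemma~\ref{lem:LIGRM} the factor $2$ does not arise as a multiplicative penalty on $work(\tau_i,T_k,\cdot)$; it arises because under RM the jobs that can interfere in $[r_k,d_k)$ have deadlines spanning an interval of length up to $2T_k$ (the window of length $T_k$, plus up to $T_i\leq T_k$ beyond $d_k$), so the hypothesis must be invoked at $t\approx 2T_k$, where the $\tfrac{1}{2}$ in (\ref{eq:8-1}) exactly cancels the doubled interval length. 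You instead assert that the in-window interfering work is at most $2\sum_i work(\tau_i,T_k,1)$ and contradict (\ref{eq:8-1}) at $t=T_k$. That assertion is false in general, because $work(\tau_i,2T_k,1)$ can strictly exceed $2\,work(\tau_i,T_k,1)$: take $T_i=\tfrac{2}{3}T_k$ and $L_i\leq\tfrac{1}{2}T_i$; then by Definition~\ref{def:maxLoad}, $work(\tau_i,T_k,1)=C_i+\bigl(C_i-q_i(\tfrac{1}{2}T_i,1)\bigr)=C_i$ (since $S_{\infty,1}$ finishes a job within $L_i$), while $work(\tau_i,2T_k,1)=3C_i$; and indeed three jobs of $\tau_i$ can each push up to $C_i$ of execution into the window, consistently with $J_k$ being the first miss. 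So the contradiction you want at $t=T_k$ simply does not follow; the hypothesis must be evaluated at the doubled length.

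There is a second, independent gap: even at $t=2T_k$, bounding what G-RM actually executes in the window by $work(\cdot,\cdot,1)$ is not immediate, because $work$ is defined through the front-loading schedule $S_{\infty,1}$ (Definition~\ref{def:maxLoad}). A higher-priority job that G-RM has barely executed before $r_k$ can carry almost all of $C_i$ into the window even though $S_{\infty,1}$ would have finished it long before $r_k$; the first-deadline-miss property alone does not exclude this. The standard repair---and what the paper does explicitly in Lemma~\ref{l: keylemma}---is to take $t^*$ as the latest time up to which G-RM has processed at least as much of every vertex of every job as $S_{\infty,1}$, work in the enlarged window $[t^*,d(J_k))$ (whose length is at least $T_k$), and run an induction over the not-all-busy intervals to show that G-RM keeps pace with the reference schedule, since otherwise $J_k$ would complete. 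Your one-sentence appeal to ``carry-in costs a factor of two'' conflates these two distinct issues (the doubled deadline span and the comparison with $S_{\infty,1}$), and resolves neither.
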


\begin{proof}
The proof is omitted here. The process to obtain the "unit-speed version" of Lemma \ref{lem:LIGRM} is similar with transforming the
upper bound of $work(\tau_i, t, s)$ in Lemma \ref{lem:Li} to an upper bound of $ work(\tau_i, t,1)$ in Lemma \ref{lem:7} for G-EDF \footnote{The transforming is through a scaling technique. Refer to \cite{xuutilizationtensity} for more details.}. 
\end{proof}

Then by combining Lemma \ref{lem:pi} and Lemma \ref{lem:8}, we can derive a utilization-tensity bound for G-RM. 

\begin{lemma} \label{l:basicbound}
	A task set $\taskset$ is schedulable under G-RM on $m$ unit-speed processors if
	\begin{equation} \label{eq:basicbound}
	\forall \tau_i \in \tau : L_i \leq T_i  ~\wedge~ U\leq \frac{ (1-\gamma_{max})^2}{2} 
    \end{equation}
	where $\displaystyle \gamma_{max} = \max_{\tau_i \in \tau} \{\gamma_{i}\}$.
\end{lemma}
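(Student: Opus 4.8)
The plan is to reduce the claim to the sufficient condition of Lemma~\ref{lem:8} and then verify that condition under the hypothesis $U \le \tfrac{1}{2}(1-\gamma_{max})^2$. Since the hypothesis already includes $L_i \le T_i$ for every $\tau_i$, it suffices to show that the workload inequality~\eqref{eq:8-1}, namely $\sum_{\tau_i \in \tau} work(\tau_i,t,1) \le \tfrac{1}{2}\big(m - \gamma_{max}(m-1)\big)\,t$, holds for all $t>0$. Dividing by $t$ and bounding each term $work(\tau_i,t,1)/t$ by the right-hand side of Lemma~\ref{lem:pi}, the whole argument becomes a static inequality among the per-task quantities $u_i$ and $\gamma_i$, with the variable $t$ eliminated.

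First I would replace every $\gamma_i$ by $\gamma_{max}$ in the bounds of Lemma~\ref{lem:pi}. For a task with $u_i \le 1$ the bound is simply $u_i$ and is unaffected; for a task with $u_i > 1$ the bound $\tfrac{u_i-\gamma_i}{1-\gamma_i}$ is increasing in $\gamma_i$ (its $\gamma_i$-derivative equals $(u_i-1)/(1-\gamma_i)^2 > 0$), and since $u_i>1$ forces $\gamma_{max} < u_i$ the substitution keeps the numerator positive and only enlarges the bound. This yields a single piecewise-linear function, $f(x)=x$ for $0\le x\le 1$ and $f(x)=\tfrac{x-\gamma_{max}}{1-\gamma_{max}}$ for $x>1$, such that $work(\tau_i,t,1)/t \le f(u_i)$ for every task, with the \emph{same} $f$ across all tasks.

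The key step is a superadditivity argument. The function $f$ is continuous at $x=1$ (both pieces give $1$), convex (its slope rises from $1$ to $1/(1-\gamma_{max})\ge 1$), and satisfies $f(0)=0$; any convex function vanishing at the origin is superadditive, so $\sum_i f(u_i) \le f\!\big(\sum_i u_i\big) = f(U_{\sum})$. Because $f$ is also increasing and $U_{\sum}=mU \le \tfrac{1}{2}m(1-\gamma_{max})^2$, I would conclude $\sum_i work(\tau_i,t,1)/t \le f\!\big(\tfrac{1}{2}m(1-\gamma_{max})^2\big)$. It then remains to check that this is at most $\tfrac{1}{2}\big(m - \gamma_{max}(m-1)\big) = \tfrac{1}{2}\big(m(1-\gamma_{max})+\gamma_{max}\big)$; evaluating $f$ on whichever branch applies (according to whether $\tfrac{1}{2}m(1-\gamma_{max})^2$ exceeds $1$) reduces this to an elementary inequality that holds with room to spare in both cases.

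I expect the main obstacle to be the worst-case reasoning of the third paragraph rather than any single computation: one must justify that spreading the total utilization over many tasks can never beat concentrating it in one task, which is exactly what convexity together with $f(0)=0$ (hence superadditivity) captures, and one must confirm that raising each $\gamma_i$ to $\gamma_{max}$ is legitimate—it is, precisely because $u_i>1$ forces $\gamma_{max}<u_i$. The branch analysis on $f$ and the final numeric inequality are routine. Structurally this mirrors the G-EDF derivation of Lemma~\ref{GEDF:utilization-tensity}, with the extra factor of $\tfrac{1}{2}$ coming directly from the factor already present in Lemma~\ref{lem:8}; that same slack is why the G-RM bound follows with even more margin than its G-EDF counterpart.
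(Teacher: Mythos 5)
Your proof is correct, and it follows the same skeleton as the paper's own proof: both reduce the claim to the workload condition (\ref{eq:8-1}) of Lemma \ref{lem:8} and bound each $work(\tau_i,t,1)/t$ via Lemma \ref{lem:pi}; the difference lies entirely in the aggregation step. The paper bounds every task's contribution by the linear overestimate $u_i/(1-\gamma_{max})$ --- for heavy tasks $\frac{u_i-\gamma_i}{1-\gamma_i}\le\frac{u_i-\gamma_i}{1-\gamma_{max}}\le\frac{u_i}{1-\gamma_{max}}$, for light tasks $u_i\le\frac{u_i}{1-\gamma_{max}}$ --- so that $\sum_i work(\tau_i,t,1)/t\le U_{\sum}/(1-\gamma_{max})$, and then the hypothesis $U_{\sum}\le\frac{m(1-\gamma_{max})^2}{2}$ yields $U_{\sum}/(1-\gamma_{max})\le\frac{m-\gamma_{max}(m-1)}{2}$ in one line with no case distinction. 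You instead keep the full piecewise-linear envelope $f$ and invoke superadditivity (convexity plus $f(0)=0$) to get $\sum_i f(u_i)\le f(U_{\sum})$, then evaluate $f$ on whichever branch applies. Both finishes check out: in your case, when $\frac{m(1-\gamma_{max})^2}{2}\le 1$ the required inequality reduces to $(1-\gamma_{max})^2\le(1-\gamma_{max})$, and otherwise to $0\le\frac{\gamma_{max}(3-\gamma_{max})}{2}$, both true. What your route buys: $f(U_{\sum})$ is a tighter intermediate bound than $U_{\sum}/(1-\gamma_{max})$, so your argument carries extra slack (it would in fact support a marginally larger utilization threshold), and it makes the worst case --- concentrating all utilization in a single heavy task --- conceptually explicit rather than hidden in algebra. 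What the paper's route buys: the crude linearization dissolves the branch analysis entirely, so the proof closes in two lines. One small point you should state explicitly: the monotonicity-in-$\gamma_i$ substitution and the well-definedness of your second branch need $\gamma_{max}\le 1$, which follows from the hypothesis $L_i\le T_i$ for all tasks.
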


\begin{proof}
If condition (\ref{eq:basicbound}) is satisfied, then we have 
\begin{eqnarray*}
	&&U_{\sum} \leq\frac{(1-\gamma_{max})^2 m +\gamma_{max}(1-\gamma_{max})}{2} \\
	&\Leftrightarrow&  \frac{U_{\sum}}{1-\gamma_{max}} \leq \frac{ (m  - \gamma_{max} (m-1)) t }{2}.
\end{eqnarray*}

From Lemma \ref{lem:pi} we have :

\begin{align*}
&\sum_{\forall \tau_i \in \tau, u_i >1}work(\tau_i, t, 1) \leq \sum_{\forall \tau_i \in \tau, u_i >1}\!\!\! \frac{u_i-\gamma_i}{1-\gamma_i}+\sum_{\forall \tau_i \in \tau, u_i \leq 1}\!\!\! u_i  \\
&\leq \frac{\displaystyle\sum_{\forall \tau_i \in \tau, u_i >1}\!\!\!u_i+\!\!\!\! \sum_{\forall \tau_i \in \tau, u_i \leq1}\!\!\!u_i-(\sum_{\forall \tau_i \in \tau, u_i >1}\!\!\!\gamma_i+\gamma_{max}\!\!\!\!\sum_{\forall \tau_i \in \tau, u_i \leq 1}u_i)}{1-\gamma_{max}} \\
&\leq\frac{U_{\sum}}{1-\gamma_{max}}
\end{align*}

Then from Lemma \ref{lem:8}, we know $\tau$ is schedulable under G-RM on $m$ unit-speed processors. 
\end{proof}

Moreover, since condition (\ref{eq:basicbound}) always meets if $U\leq\frac{1}{2+\sqrt{3}}$ and $\gamma_{max} \leq \frac{1}{2+\sqrt{3}}$, lemma \ref{l:basicbound} implies the same capacity augmentation bound of $2+\sqrt{3}$ for G-RM with the best result of the-state-of-the-art. 

\section{New analysis for G-RM}
In the following, we develop new techniques for analyzing the schedulablity of DAG task systems under G-RM. 

We first give some results that will be useful in our analysis. 

\begin{lemma} \label{l: schedulablework}
	If the total interfering workload $\interfering_i$ on a job $J_i$ is bounded by
	$\interfering_i \leq m T_i -(m-1)L_i$, then job $J_i$ can meet its deadline on $m$ processors under G-RM.
\end{lemma}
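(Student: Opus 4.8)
The plan is to bound the makespan of $J_i$ by charging every unit of time during which $J_i$ has not yet finished to either useful progress along a critical path of $J_i$ or to interfering workload that must be occupying all $m$ processors. I would first fix the job $J_i$ under analysis and consider the time interval $[r(J_i), f(J_i)]$ during which $J_i$ executes. The key dichotomy, standard in this style of federated/global analysis, is to classify each (discrete) time unit $t$ in this window as either a \emph{busy} instant, at which all $m$ processors are occupied by work, or a \emph{non-busy} instant, at which at least one processor is idle or running lower-priority work. At every non-busy instant, since $J_i$ is the job of interest and at least one processor is free to execute it, an eligible vertex of $J_i$ must be running (otherwise G-RM would have placed one there); consequently every non-busy instant advances some unfinished chain of $J_i$ toward completion.

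Next I would argue that the total number of non-busy instants cannot exceed the critical path length $L_i$. The reasoning is the classical greedy-DAG argument: whenever a processor is idle with respect to $J_i$, the vertex being executed lies on a path through the DAG, and since predecessors along that path have already completed, the accumulated non-busy time is bounded by the length of the longest complete path, namely $L_i$. Thus if $J_i$ has not finished, the non-busy time is at most $L_i$, and all remaining time in the window must be busy time.

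Then I would count the busy instants. At each busy instant, all $m$ processors are executing work that is not part of a non-busy advance of $J_i$; this work is either interfering workload from other (higher-priority) jobs or the remaining workload of $J_i$ itself beyond the critical path. The crucial bookkeeping step is that the total work available to fill busy instants is the interfering workload $\interfering_i$ plus the non-critical volume of $J_i$; but the cleanest route is to bound the busy time directly by $\interfering_i / m$ together with the $L_i$ units that may also be consumed on the processors during those busy slots. Combining, the finish time satisfies $f(J_i) - r(J_i) \le L_i + \frac{\interfering_i + (m-1)L_i}{m}$, where the $(m-1)L_i$ term accounts for the work executed on the other $m-1$ processors during the non-busy critical advances. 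Substituting the hypothesis $\interfering_i \le mT_i - (m-1)L_i$ then yields $f(J_i) - r(J_i) \le L_i + \frac{mT_i - (m-1)L_i + (m-1)L_i}{m} = L_i + T_i - L_i = T_i$, so $J_i$ meets its deadline.

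I expect the main obstacle to be making the charging argument airtight, specifically the accounting of the $(m-1)L_i$ term: one must verify that during the at-most-$L_i$ non-busy instants the other $m-1$ processors can indeed be conservatively assumed to each execute one unit of interfering work, and that this work is already included in (rather than additional to) the interfering workload budget $\interfering_i$. Getting the inequality to collapse exactly to $T_i$ hinges on correctly deciding whether the critical-path work and the $(m-1)L_i$ slack are double-counted; the bound in the hypothesis is evidently tuned so that these terms telescope, so the delicate part is matching the definition of $\interfering_i$ (total interfering workload) to the work that actually delays $J_i$ under G-RM.
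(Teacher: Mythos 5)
Your overall strategy is the same as the paper's (classify each instant of the scheduling window as \emph{busy} --- all $m$ processors occupied --- or \emph{non-busy}, use the greedy critical-path argument to bound the total non-busy time by $L_i$, then count work), but your final accounting contains a genuine error, and it is exactly the double-counting you flagged at the end. From your own inequality $f(J_i)-r(J_i) \le L_i + \frac{\interfering_i + (m-1)L_i}{m}$, substituting the hypothesis $\interfering_i \le mT_i-(m-1)L_i$ gives
\begin{equation*}
L_i + \frac{mT_i-(m-1)L_i+(m-1)L_i}{m} \;=\; L_i + T_i ,
\end{equation*}
not $T_i$: the step ``$=L_i+T_i-L_i=T_i$'' is arithmetically false, because the fraction equals $T_i$ exactly, so nothing cancels the leading $L_i$. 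The non-busy time must not appear both as a standalone $L_i$ summand and again, as $(m-1)L_i$ units of work, inside the fraction. The correct bookkeeping is: let $Y$ be the non-busy time and $X$ the busy time up to the finish (or deadline); every unit of work executed in the window, whether at a busy or a non-busy instant, is drawn from the \emph{single} budget $\interfering_i$ (which, for the lemma to be usable the way the paper later applies it in Lemma \ref{l: keylemma}, must be read as all workload competing for the processors in the window, including $J_i$'s own work --- the ambiguity you correctly worried about), so $mX + Y \le \interfering_i$, and hence
\begin{equation*}
X+Y \;\le\; \frac{\interfering_i+(m-1)Y}{m} \;\le\; \frac{\interfering_i+(m-1)L_i}{m} \;\le\; \frac{mT_i-(m-1)L_i+(m-1)L_i}{m} \;=\; T_i .
\end{equation*}

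The paper sidesteps this trap by arguing the contrapositive with a \emph{lower} bound on executed work rather than an upper bound on elapsed time: if $J_i$ missed its deadline, the non-busy time in $[r(J_i),d(J_i))$ would be at most $L_i$ (otherwise the critical path would have completed), so the busy time satisfies $X \ge T_i - L_i$, and the work executed in the window is at least $mX+(T_i-X)=(m-1)X+T_i \ge mT_i-(m-1)L_i$; this exhausts the entire budget $\interfering_i \le mT_i-(m-1)L_i$, so $J_i$ must in fact have finished by its deadline --- a contradiction. Your time-domain route is salvageable with the corrected inequality above, but as written the proof does not establish the lemma.
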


\begin{proof}
We prove this lemma by contradiction. Suppose that $\interfering_i \leq m T_i -(m-1)L_i$ and $J_i$ misses its deadline. Let $I=[r(J_i), d(J_i))$, i.e., the scheduling window of $J_i$. Let an in-complete interval denote a time interval where at least one processor is idle at any time in this interval. Since for each time unit of in-complete interval, the remaining critical path of $J_i$ is reduced by one time unit. Then we know the total length of in-complete intervals during $I$ is no more than $L_i$, otherwise $J_i$ must finish its execution. We denote by $X$ the total length of the
intervals within $I$ where in the G-RM schedule all $m$ processors
are busy. Then we have the total work done by G-RM during $I$ is at least:
\begin{eqnarray*}
&& m*X+(T_i-X)=(m-1)X+T_i \nonumber \\
&\geq& (m-1)(T_i-L_i)+T_k=mT_i-(m-1)L_i. 
\end{eqnarray*}

Since $\interfering_i \leq m T_i -(m-1)L_i$, then $J_i$ must be finished at its deadline, reaching a contradiction. 
\end{proof}

Let $\jobset$ denote the minimum set of jobs released from $\taskset$ and failing to be schedulable under G-RM. Suppose that $J_k$ from $\tau_k$ is the first job that misses its deadline. Without loss of generality, we assume that there are no jobs with period greater than $T_k$ in $\jobset$, since the removal of such jobs from $\jobset$ does not affect G-RM, i.e., $J_k$ will still miss its deadline ($J_k$ will only be interfered by other jobs from tasks with periods no greater than $T_k$ under G-RM). A straightforward property can be observed. 

\begin{lemma} \label{sinfty}
	$S_{\infty,s_1}$ finishes at least as every vertex of each job in $\jobset$ than $S_{\infty,s_2}$ if $s_1\geq s_2$ during a interval of $[0, t)$.
\end{lemma}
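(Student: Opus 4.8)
The plan is to exploit the defining feature of the hypothetical scheduler $S_{\infty,s}$: because it runs on infinitely many speed-$s$ processors, it never suffers any resource contention, so a vertex starts executing at the exact instant it becomes eligible. Consequently the finishing time of each vertex is dictated purely by the precedence structure of the DAG and the execution requirements, not by the presence of competing jobs. Measuring time relative to the release of the job that contains $v$, I would define $f_s(v)$ to be the finishing time of vertex $v$ under $S_{\infty,s}$ and observe the recursion
\begin{equation*}
f_s(v) = \max_{(u,v)\in E_i} f_s(u) + \frac{c(v)}{s},
\end{equation*}
where the maximum is taken over the immediate predecessors of $v$ and is interpreted as $0$ when $v$ is the head vertex. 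Since jobs do not interact on $S_{\infty,s}$, this characterization holds simultaneously for every job in $\jobset$.

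The core of the argument is a monotonicity claim: $f_{s_1}(v)\le f_{s_2}(v)$ for every vertex $v$ whenever $s_1\ge s_2$. I would prove this by induction along a topological order of the DAG. For the base case, a head (source) vertex has $f_s(v)=c(v)/s$, which is non-increasing in $s$, so $f_{s_1}(v)\le f_{s_2}(v)$. For the inductive step, assume the inequality holds for every immediate predecessor $u$ of $v$; then $\max_{u} f_{s_1}(u)\le \max_{u} f_{s_2}(u)$ because the maximum is monotone in its arguments, and $c(v)/s_1\le c(v)/s_2$ because $s_1\ge s_2$. Adding the two terms gives $f_{s_1}(v)\le f_{s_2}(v)$, completing the induction.

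Finally I would translate the pointwise finishing-time inequality into the stated \enquote{finished during $[0,t)$} form. A vertex $v$ that $S_{\infty,s_2}$ has completed within $[0,t)$ satisfies $f_{s_2}(v)\le t$ relative to its release; by the monotonicity claim $f_{s_1}(v)\le f_{s_2}(v)\le t$, so $v$ is also completed by $S_{\infty,s_1}$ within the same window. Hence for every job in $\jobset$ the set of vertices completed by $S_{\infty,s_1}$ in $[0,t)$ contains that completed by $S_{\infty,s_2}$, which is exactly the assertion.

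The step I expect to require the most care is the first one, namely rigorously justifying that on infinitely many processors a vertex begins exactly when it becomes eligible, so that the clean recursion for $f_s(v)$ is valid and no interference term can creep in. Once that characterization is established, the monotonicity induction and the final set-containment argument are routine. A minor point worth stating explicitly is that the reasoning is carried out per job and then applied uniformly across all jobs in $\jobset$, which is legitimate precisely because $S_{\infty,s}$ supplies unbounded parallelism and therefore the jobs' schedules are mutually independent.
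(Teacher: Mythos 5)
Your argument is sound and reaches the right conclusion, but by a genuinely different route than the paper. The paper proves the lemma by induction on \emph{time}: assuming dominance holds up to some instant $t^{*}$, every vertex that $S_{\infty,s_2}$ is executing but $S_{\infty,s_1}$ has not yet finished is eligible under $S_{\infty,s_1}$ (its predecessors are at least as far along, by the induction hypothesis), and since $S_{\infty,s_1}$ has a dedicated processor of speed $s_1\geq s_2$ for every eligible vertex, dominance is preserved over the next time unit. You instead exploit the structure of $S_{\infty,s}$ globally: with unbounded parallelism each vertex starts the instant it becomes eligible, giving the closed-form recursion $f_s(v)=\max_{(u,v)\in E_i} f_s(u)+c(v)/s$, and monotonicity of finish times in $s$ follows by induction over a \emph{topological order} of the DAG rather than over time. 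Your route is more explicit than the paper's terse time induction---it pins down exactly why no interference term can appear---at the cost of having to justify the recursion; the paper's time-stepping argument avoids any closed form and speaks directly about progress at every instant.

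That difference is also where you should tighten one step. The lemma is invoked later (Case 2 of Lemma~\ref{l: keylemma}) in the form ``$S_{\infty,1}$ has processed at least as much of every vertex of each job as $S_{\infty,\gamma_k}$ at any time,'' i.e., pointwise dominance of the \emph{cumulative work} done on each vertex, including a vertex that is only partially executed at time $t$. Your final translation establishes only containment of the sets of \emph{completed} vertices, which is strictly weaker. Fortunately, your machinery closes this in one line: under $S_{\infty,s}$ a vertex $v$ starts at $\sigma_s(v)=\max_{(u,v)\in E_i} f_s(u)$ (measured from its job's release) and then executes continuously at speed $s$, so the work done on $v$ by time $t$ equals $\min\{c(v),\, s\cdot\max(0,\, t-\sigma_s(v))\}$; since your induction gives $\sigma_{s_1}(v)\leq\sigma_{s_2}(v)$ and $s_1\geq s_2$, this quantity under $s_1$ dominates that under $s_2$ for every $t$. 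Adding that observation makes your proof deliver exactly the statement the paper uses downstream.
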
 

\begin{proof}
	This can be proved by induction. Clearly, the lemma is true for time $t=0$. Suppose that the lemma is true at $t*$. Then during the next time unit, every vertex of each jobs processed by $S_{\infty,s_2}$ that is unfinished by $S_{\infty,s_1}$ is eligible for $S_{\infty,s_1}$. Since a processor with speed $s_1$ can process no less workload than a processor with speed $s_2$ during each time unit, the lemma is true. 
\end{proof}

 Let $\work^{\jobset, I}$ denote the work of $\jobset$ done by $S_{\infty,1}$ during an interval of $I$ and $A^{\jobset, I}$ denote the work of $\jobset$ done by the scheduler of G-RM during the interval of $I$. The following lemma gives us a necessary condition for task set $\tau$ to be unschedulable by G-RM.

\begin{lemma}\label{l: keylemma}
	If a task set $\tau$ satisfying necessity conditions (\ref{eq:L<D}) and (\ref{eq:U<m}) is not schedulable under G-RM on $m$ unit-speed processors, then there must be a task $\tau_k \in \tau$ and a time interval $I=[t, d(J_k))$, where $|I|\geq T_k$ and:
	\[
	\work^{\jobset, I}> (m-(m-1)\gamma_{k}) |I|.
	\]
\end{lemma}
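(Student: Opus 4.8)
The plan is to prove the contrapositive: I would assume $\tau$ is unschedulable and manufacture the required task $\tau_k$ and window $I$ directly from the \emph{first} deadline miss. Let $J_k$ be the first job in the minimal failing set $\jobset$ to miss its deadline; by the standing assumption $\jobset$ contains no job of period larger than $T_k$, so every job that interferes with $J_k$ has priority at least that of $J_k$. The strategy is two-staged: first turn the miss of $J_k$ into a lower bound on the work that G-RM \emph{actually executes} over a window ending at $d(J_k)$, and then transport that bound to the idealized scheduler $S_{\infty,1}$, i.e.\ to $\work^{\jobset,I}$. Condition~(\ref{eq:L<D}) gives $L_k\le T_k$ (hence $\gamma_k\le 1$) and guarantees $S_{\infty,1}$ finishes each job by its own deadline, while (\ref{eq:U<m}) ensures the relevant busy period is finite.

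First I would fix the window. Rather than the scheduling window $[r(J_k),d(J_k))$ itself, I would extend it backward to the start $t$ of the G-RM busy period containing $d(J_k)$ and set $I=[t,d(J_k))$; since this busy period covers $J_k$'s window, automatically $|I|\ge T_k$. Inside $I$ I would reuse the counting argument behind Lemma~\ref{l: schedulablework}, classifying each instant as \emph{complete} (all $m$ processors busy) or \emph{in-complete} (some processor idle). Because $J_k$ is pending throughout its window, each in-complete instant there advances $J_k$'s remaining critical path, so the total in-complete length is at most $L_k=\gamma_k T_k\le\gamma_k|I|$. Writing $X$ for the complete-time length, the work executed by G-RM over $I$ satisfies
\[
A^{\jobset,I}\;\ge\;mX+(|I|-X)\;=\;(m-1)X+|I|\;\ge\;m|I|-(m-1)\gamma_k|I|\;=\;\bigl(m-(m-1)\gamma_k\bigr)|I|,
\]
with the inequality strict because $J_k$ in fact fails to complete.

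The second stage transports this to $S_{\infty,1}$. By the same induction used for Lemma~\ref{sinfty}, $S_{\infty,1}$ (unbounded parallelism at unit speed) completes every vertex no later than any unit-speed $m$-processor schedule, so for each job the cumulative completed volume under $S_{\infty,1}$ dominates that under G-RM at every instant. To upgrade this pointwise volume dominance to the interval statement $\work^{\jobset,I}\ge A^{\jobset,I}$, I would exploit that $t$ begins a busy period: at such a point G-RM has no pending backlog, so $S_{\infty,1}$ carries \emph{no lead} into $I$ (both schedulers have completed exactly the jobs released before $t$), and the dominance of completed volumes at $d(J_k)$ is therefore not cancelled by work $S_{\infty,1}$ did before $t$. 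Combining $\work^{\jobset,I}\ge A^{\jobset,I}$ with the bound above yields $\work^{\jobset,I}>(m-(m-1)\gamma_k)|I|$.

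The hard part will be the choice of $t$, where the two stages pull in opposite directions. For the volume-transport step I want $t$ to be a ``caught-up'' instant (no backlog, so $S_{\infty,1}$ has no pre-existing lead); but such a start may admit in-complete time in $[t,r(J_k))$ that does not advance $J_k$'s critical path, which would inflate the in-complete length beyond $\gamma_k|I|$ and break the work bound. Conversely, extending backward only through all-busy time preserves the work bound but may leave $S_{\infty,1}$ with a lead at $t$. The crux is therefore to define the busy period so that its start simultaneously (i) keeps the in-complete time inside $I$ bounded by $\gamma_k|I|$ and (ii) carries no lead of $S_{\infty,1}$ into $I$; I expect this reconciliation—rather than either individual estimate—to be the technical heart of the argument.
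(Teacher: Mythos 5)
There is a genuine gap, and you have in fact pinpointed it yourself in your last paragraph: your argument needs a window start $t$ that simultaneously (i) keeps the in-complete time inside $I$ below $\gamma_k|I|$ and (ii) carries no lead of $S_{\infty,1}$ into $I$, and you leave the reconciliation of (i) and (ii) as an open ``technical heart'' rather than supplying it. Neither of your two candidate mechanisms closes it. For (i), the critical-path counting behind Lemma \ref{l: schedulablework} only applies to in-complete instants inside $[r(J_k),d(J_k))$, where $J_k$ is pending; in-complete time in $[t,r(J_k))$ advances nothing, so the bound $Y\leq \gamma_k|I|$ simply does not follow for the extended window. For (ii), a busy-period start does not give ``no pending backlog'' for DAG tasks: under global scheduling of DAGs, a processor can be idle while unfinished work is still pending, because the pending vertices may be ineligible due to precedence constraints; so the dominance of $S_{\infty,1}$ at $t$ cannot be inferred from processors having been idle just before $t$.

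The paper's proof resolves the tension in a way your proposal does not anticipate: it does \emph{not} look for a window start satisfying both properties via the critical-path argument. It takes $t^*$ to be the \emph{latest} time such that G-RM has processed at least as much of every vertex of every job in $\jobset$ as $S_{\infty,1}$ on $[0,t^*)$ (such a time exists since $t^*=0$ qualifies), which yields property (ii) by construction, i.e., $\work^{\jobset,I^*}>A^{\jobset,I^*}$ for $I^*=[t^*,d(J_k))$. Then, instead of your step (i), it bounds the in-complete time $Y$ inside $I^*$ by a catch-up argument against the \emph{slowed-down} ideal scheduler $S_{\infty,\gamma_k}$: each in-complete unit of the G-RM schedule lets G-RM catch up one ``scaled'' unit of $S_{\infty,\gamma_k}$'s progress, so if $Y\geq \gamma_k|I^*|$ then by $d(J_k)$ G-RM would have finished everything $S_{\infty,\gamma_k}$ finishes by $d(J_k)$; since $L_k/\gamma_k=T_k$, the scheduler $S_{\infty,\gamma_k}$ completes $J_k$ by its deadline, contradicting the miss. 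Hence $Y<\gamma_k|I^*|$, and the counting $A^{\jobset,I^*}\geq mX+Y=m|I^*|-(m-1)Y>(m-(m-1)\gamma_k)|I^*|$ closes the case (a separate, easy case handles the situation where G-RM already dominates $S_{\infty,1}$ at $r(J_k)$, in which $I=[r(J_k),d(J_k))$ itself works). This substitution of the $S_{\infty,\gamma_k}$ comparison for the critical-path count on the extended window is precisely the missing idea in your proposal.
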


\begin{proof}
	Suppose that $J_k$ from $\tau_k$ be the first job that misses its deadline. Let $I^{'}=[0, r(J_k))$. We prove the lemma by distinguishing two cases:
	\begin{itemize}
		\item \textbf{Case 1}: $\work^{\jobset, I^{'}} \leq A^{\jobset, I^{'}}$. Sine  $S_{\infty,1}$ is feasible and G-RM fails at $d(J_k)$, we know that the work of $\jobset$ finished by $S_{\infty,1}$ during $[r(J_k), d(J_k))$ must be more than the work done by G-RM during $[r_k, d_k)$, i.e., $\work^{\jobset, [r(J_k), d(J_k))}> A^{\jobset, [r(J_k), d(J_k))}$. Since $J_k$ misses its deadline at $d(J_k)$, from Lemma \ref{l: schedulablework}, we have 
		\begin{equation} \label{eq:temp1}
		\work^{\jobset, [r(J_k), d(J_k))}\geq  (m-(m-1)\gamma_k) T_k
		\end{equation} 
		
		\item \textbf{Case 2}:  $\work^{\jobset, I^{'}} > A^{\jobset, I^{'}}$. Let $t^* \leq r(J_k)$ denote the latest point in time such that at any time $t \in [0, t*)$ the scheduler of G-RM has processed at least as much of every vertex of each job in $\jobset$ as $S_{\infty,1}$ at time $t$. Such a time exists, since $t^{*} = 0$ satisfies this property. Let $I^{*}=[t^*, d(J_k))$. Then we have 
		\[
		\work^{\jobset, [0, t^*)} \leq A^{\jobset, [0, t^*)}
		\]
		Therefore
		\[
		\work^{\jobset, I^{*}} > A^{\jobset,I^{*}}.
		\] 
		Then it is sufficient to prove this lemma by proving that G-RM finishes more than  $(m-(m-1)\gamma_k) |I^{*}|$ units of work within $I^{*}=[t^*, d(J_k))$.
		
	   Since $\gamma_k \leq 1$,	from Lemma \ref{sinfty}, $S_{\infty,1}$ has processed at least as much as every vertex of each job in $\jobset$ as $S_{\infty, \gamma_k}$ at any time $t \in [0, t*)$. Then we know the scheduler of G-RM has processed at least as much of every vertex of each job in $\jobset$ as $S_{\infty, \gamma_k}$ at any time $t \in [0, t*)$.     
		
		 Let $X$ denote the total length of the intervals within $I^{*}$ where all $m$ processors are busy in the G-RM schedule, and $Y=|I^{*}|-X$, i.e., the total length of intervals within $I^{*}$ where at least one processor is idle. We distinguish two cases. First assume that $Y\geq \gamma_{k}|I^{*}|$. Let $\Theta_1, \cdots, \Theta_\varphi \subseteq I $ denote all sub-intervals of $I^{*}$ where not all processors are busy.

		
		Let $\eta=\frac{\gamma_k(d(J_k)-t^{*})}{\delta}=\frac{\gamma_k|I^{*}|}{\delta}$, where $\eta$ is a positive integer and $\delta$ is an small positive number\footnote{$\delta$ can be considered as the length of the minimum time unit, which is small enough that is divisible by $\gamma_k(d(J_k)-t^{*})$.}.

		We define two sequences of time points: $t_0, t_1, \cdots, t_{\eta}$ and $t_0^{'}, t_1^{'}, \cdots, t_{\eta}^{'}$, where $t_i=t^{*}+i*\frac{\delta}{\gamma_k}$, $t_{0}^{'}=t^{*}$ and $|[t*, t_{i}^{'})\cap \cup_{i}\Theta_i|=\delta\times i$. Then we have $t_{\eta}=t^*+|I^*|=d(J_k)$, and $|[t*, t_{\eta}^{'})\cap \cup_{i}\Theta_i|=\gamma_k|I^*|$. Since $|\cup_{i}\Theta_i|\geq \gamma_k|I^*|$, we know $t_{\eta}^{'} \leq d(J_k)$. 
		
		Then we prove that by $t_{\eta}^{'}$ G-RM has finished as much as every vertex of each job as $S_{\infty, \gamma_k}$ by $t_{\eta}$. The proof is by induction. Clearly, from the definition of $t^{*}$, G-RM has finished as much as every vertex of each job as $S_{\infty, \gamma_k}$ by $t^{*}$. Then suppose that by $t_i^{'}$ G-RM has finished as much as every vertex of each job as $S_{\infty, \gamma_k}$ by $t_i$ where $0<i<\eta$. Then at each time point
		during $|[t_i^{'}, t_{i+1}^{'})\cap \cup_{i}\Theta_i|$ all vertices of each job that are unfinished by G-RM and processed by $S_{\infty, \gamma_k}$ during $[t_i, t_{i+1})$ are available for G-RM. Since
		during all these time points G-RM does not use all processors, by time $t_{i+1}^{'}$, it has processed at least as much of every job
		as $S_{\infty, \gamma_k}$ by time $t_{i+1}$. Hence at time $t_{\eta}^{'}$ G-RM has finished as much of every vertex of each job as $S_{\infty, \gamma_k}$ by $t_{\eta}=d(J_k)$.
			
		
		Since $L_k/\gamma_k = T_k$, $J_k$ can meet its deadline under $S_{\infty, \gamma_k}$. This contradicts with the assumption that $J_k$ misses its deadline under G-RM.
		
		
		
		Then we know $Y<\gamma_{k}|I^{*}|$. Therefore, the work of $\jobset$ that G-RM finishes during $I^{*}$ is at least
		\begin{eqnarray*}
		mX+Y&=&m(|I^{*}|-Y)+Y=m I^{*}-(m-1)Y \nonumber \\
		    &>&m |I^{*}|-(m-1)\gamma_{k}|I^{*}| \nonumber \\
		    &=&(m -(m-1)\gamma_{k})|I^{*}|
		\end{eqnarray*}
		Therefore, it holds
		\begin{equation} \label{eq:temp2}
		\work^{\jobset, I^{*}} > A^{\jobset,I^{*}}\geq (m -(m-1)\gamma_{k})|I^{*}|
		\end{equation} 		
	
\end{itemize}
	Then by construction of $I$ in the above two cases, i.e., $[r(J_k), d(J_k))$ in (\ref{eq:temp1}) and $I^*$ in (\ref{eq:temp2}), the lemma is proved. 	
\end{proof}	

\subsection{utilization-tensity bound}
In the following we present analysis to derive a much tighter utilization-tensity bound for G-RM than condition (\ref{eq:basicbound}). 

From Lemma \ref{l: keylemma}, we can identify a sufficient condition for a task set to be schedulable under G-RM by violating the necessary condition pointed out in Lemma \ref{l: keylemma}. Clearly, in order to find such a necessary condition for task set $\tau$ to be unschedulable by G-RM, we need to bound $\work^{\jobset, I}$.


\begin{lemma}\label{l:maximumwork}
 For any interval of $I=[t, d(J_k))$, where $d(J_k)$ is the deadline of a job $J_k$ from $\tau_k$ and $|I|\geq T_k$, i.e., $t<r(J_k)$, it is satisfied:
 \[
 \work^{\jobset, I}	\leq \sum_{\forall \tau_i \in \tau, T_i \leq T_k}work(\tau_i,|I|,1)+\sum_{\forall i, i \neq k \wedge T_i\leq T_k} C_i
 \]
 
\end{lemma}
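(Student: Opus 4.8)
The plan is to decompose $\work^{\jobset, I}$ task by task and, for each task, to split the jobs whose scheduling windows overlap $I$ into those whose deadline lies at or before $d(J_k)$ and the at most one job whose window crosses the right endpoint of $I$. First I would invoke the pruning assumption on $\jobset$: since $\jobset$ contains no job from a task with period exceeding $T_k$, we may write $\work^{\jobset,I}=\sum_{\tau_i\in\tau,\,T_i\le T_k}\work_i$, where $\work_i$ denotes the work that $S_{\infty,1}$ performs during $I$ on the jobs of $\tau_i$ that belong to $\jobset$. It then suffices to show $\work_i\le work(\tau_i,|I|,1)+C_i$ for each $\tau_i$, and to strengthen this to $\work_k\le work(\tau_k,|I|,1)$ for $i=k$.

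For a fixed $\tau_i$ I would use that under $S_{\infty,1}$ every job executes entirely inside its own scheduling window, so only jobs whose window intersects $I$ contribute to $\work_i$. These split into the \emph{in-window} jobs, whose deadline is at most $d(J_k)$, and at most one \emph{carry-out} job, released in $(d(J_k)-T_i,\,d(J_k))$ with deadline after $d(J_k)$; uniqueness of the latter follows from the minimum separation $T_i$ between successive releases. The carry-out job contributes at most $C_i$, since a single job of $\tau_i$ carries total work $C_i$. For the in-window jobs, the work they receive during $I$ — including the tail of the carry-in job whose window straddles the left endpoint $t$ — is exactly the work done by $S_{\infty,1}$ during an interval of length $|I|$ from jobs with deadlines inside that interval, for one particular release sequence of $\tau_i$. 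By the worst-case characterization of $work(\tau_i,\cdot,1)$ in Definition \ref{def:maxLoad}, this quantity is bounded by $work(\tau_i,|I|,1)$; in particular the partial contribution of the carry-in job is precisely the contribution captured by the $q_i$ term for the fractional part of $|I|/T_i$. Hence $\work_i\le work(\tau_i,|I|,1)+C_i$.

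Finally I would remove the extra $C_k$ for $\tau_k$ by exploiting that the right endpoint of $I$ is the deadline $d(J_k)$ of $J_k$ itself. Any carry-out job of $\tau_k$ would have to be released in $(d(J_k)-T_k,\,d(J_k))=(r(J_k),\,d(J_k))$, but $J_k$ is released at $r(J_k)$ and the next job of $\tau_k$ no earlier than $r(J_k)+T_k=d(J_k)$, so $\tau_k$ has no carry-out job and $\work_k\le work(\tau_k,|I|,1)$. Summing over all $\tau_i$ with $T_i\le T_k$ then gives the claimed inequality.

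The step I expect to be the main obstacle is the in-window bound: making rigorous that the work delivered during $I$ to the in-window jobs — especially the carry-in job, whose release precedes $t$ — is dominated by $work(\tau_i,|I|,1)$. This rests on the eager behaviour of $S_{\infty,1}$ (each job's workload is processed as early as its precedence constraints permit) together with the fact that the $work$ function already maximizes over all admissible release sequences with some job deadline aligned to the interval's right end. The carry-out job is exactly the contribution that this alignment excludes, which is why it must be added back separately as $C_i$, and why the alignment of $I$ at $d(J_k)$ is what spares $\tau_k$ the extra term.
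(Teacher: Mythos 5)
Your proof is correct and takes essentially the same approach as the paper's: both decompose the work per task, bound the contribution of jobs with deadlines in $I$ by $work(\tau_i,|I|,1)$ via its worst-case characterization, charge the single carry-out job at most $C_i$, and observe that $\tau_k$ has no carry-out term because $I$ ends exactly at $d(J_k)$. The only cosmetic difference is that the paper first aligns a subinterval $I_1=[t,d_i^{*})$ at the latest deadline of $\tau_i$ in $I$ and then uses monotonicity of $work$ in $t$, whereas you invoke the maximizing characterization of $work$ on $I$ directly; your explicit minimum-separation arguments for the uniqueness (and, for $\tau_k$, absence) of the carry-out job are actually more detailed than the paper's.
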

	
\begin{proof}
	
We consider two cases:
\begin{itemize}
	\item $\tau_k$. The work of jobs in $\jobset$ from $\tau_k$ done by $S_{\infty,1}$ during the interval of $I$ is upper bounded by $work(\tau_k,|I|,1)$. 
	\item $\forall i, i \neq k \wedge T_i\leq T_k$. For each $\tau_i$, we divide $I$ into two parts: $I_1=[t, d_i^{*})$ and $I_2=[d_i^{*}, d_k)$ where $d_i^{*}$ denote the latest job from $\tau_i$ with deadline falling within $I$. Then the work of jobs in $\jobset$ from $\tau_i$ done by $S_{\infty,1}$ during $I$ is no more than $work(\tau_i,|I_1|,1)+C_i\leq work(\tau_i,|I|,1)+C_i$. 
\end{itemize}
	
By summing up the work of all jobs in $\jobset$ done by $S_{\infty,1}$ in $I$, the lemma is proved.	
\end{proof}

With Lemma \ref{l: keylemma} and Lemma \ref{l:maximumwork}, we are now ready to derive a sufficient condition for G-RM to be schedulable:

\begin{lemma}\label{l:connectworkt}
	A task set $\tau$ satisfying necessity conditions (\ref{eq:L<D}) and (\ref{eq:U<m}) is schedulable by G-RM on $m$ unit-speed processors if for $\forall k, \tau_k \in \taskset$, it satisfies:
	
	\begin{equation}\label{eq:RMugly}
	\sum_{\forall \tau_i \in \tau \atop T_i \leq T_k}\mathop{\rm{sup}}\limits_{t\geq T_i} \frac{\lambda_{i,k}(t)}{t}\leq m  - \gamma_{k} (m-1)
	\end{equation}	

   where $\lambda_{i,k}(t)=work(\tau_i, t, 1)+C_i$ if $i\neq k$, otherwise $\lambda_{i,k}(t)=work(\tau_i, t, 1)$.
\end{lemma}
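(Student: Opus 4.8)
The plan is to establish the contrapositive by stitching together the two preceding lemmas. I would suppose, for contradiction, that $\tau$ satisfies the necessity conditions (\ref{eq:L<D}) and (\ref{eq:U<m}) but is \emph{not} schedulable under G-RM on $m$ unit-speed processors. By Lemma \ref{l: keylemma} there then exist a task $\tau_k \in \tau$ and an interval $I = [t, d(J_k))$ with $|I| \geq T_k$ such that
\[
\work^{\jobset, I} > (m - (m-1)\gamma_k)\,|I|.
\]
Since we may restrict $\jobset$ to jobs from tasks with period at most $T_k$, Lemma \ref{l:maximumwork} applies to this same interval $I$ and yields
\[
\work^{\jobset, I} \leq \sum_{\forall \tau_i \in \tau,\, T_i \leq T_k} work(\tau_i, |I|, 1) + \sum_{\forall i,\, i\neq k \wedge T_i \leq T_k} C_i.
\]

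The key bookkeeping observation I would use next is that this right-hand side is precisely $\sum_{\tau_i \in \tau,\, T_i \leq T_k} \lambda_{i,k}(|I|)$: the diagonal term $i=k$ contributes only $work(\tau_k,|I|,1)$, matching $\lambda_{k,k} = work(\tau_k,\cdot,1)$, while every off-diagonal term with $T_i \leq T_k$ contributes $work(\tau_i,|I|,1) + C_i = \lambda_{i,k}(|I|)$, so the extra volume terms are absorbed exactly. Chaining the two displayed inequalities and dividing through by $|I| > 0$ (valid since $|I| \geq T_k > 0$) then gives
\[
m - (m-1)\gamma_k < \sum_{\forall \tau_i \in \tau,\, T_i \leq T_k} \frac{\lambda_{i,k}(|I|)}{|I|}.
\]

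To reach the stated threshold I would finally relax each summand to its supremum. Every task in the index set has $T_i \leq T_k \leq |I|$, so $t = |I|$ is admissible in the supremum over $t \geq T_i$, whence $\frac{\lambda_{i,k}(|I|)}{|I|} \leq \sup_{t \geq T_i} \frac{\lambda_{i,k}(t)}{t}$ for each $i$. Summing and combining with the previous inequality produces $\sum_{\tau_i \in \tau,\, T_i \leq T_k} \sup_{t \geq T_i} \frac{\lambda_{i,k}(t)}{t} > m - \gamma_k(m-1)$, which directly contradicts hypothesis (\ref{eq:RMugly}) instantiated at this particular $\tau_k$. Hence no failing task can exist and $\tau$ is schedulable.

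Because the substantive work—the interference characterization of Lemma \ref{l: keylemma} and the $S_{\infty,1}$-work bound of Lemma \ref{l:maximumwork}—is already in place, I do not expect a genuine obstacle; what remains is essentially algebraic. The only step demanding real care is verifying that the index set $\{\tau_i : T_i \leq T_k\}$ appearing in Lemma \ref{l:maximumwork} coincides with the one in (\ref{eq:RMugly}), and that the unification into $\lambda_{i,k}$ attaches the extra $C_i$ term to every task \emph{except} $i=k$; getting the diagonal case right is where an indexing slip would most plausibly arise.
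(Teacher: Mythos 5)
Your proof is correct and takes essentially the same route as the paper's: a proof by contradiction that chains Lemma \ref{l: keylemma} with Lemma \ref{l:maximumwork}, rewrites the resulting bound as $\sum_{T_i \leq T_k} \lambda_{i,k}(|I|)$, divides by $|I|$, and relaxes each term to its supremum (admissible since $|I| \geq T_k \geq T_i$). If anything, your version is more explicit than the paper's, which compresses the diagonal-term bookkeeping and the final supremum step into a single asserted implication.
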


\begin{proof}
	We prove the lemma by contradiction. Suppose $\forall i, \tau_i \in \tau$, (\ref{eq:RMugly}) is met and $\tau$ is not schedulable by G-RM. Then from Lemma \ref{l: keylemma} and Lemma \ref{l:maximumwork}, we know there must exist a task $\tau_k$ and a time interval $I$, where $|I|\geq T_k$ and 
	\[
	(m-(m-1)\gamma_{k}) |I|< \sum_{\forall \tau_i \in \tau \atop T_i \leq T_k}work(\tau_i,|I|,1)+\sum_{\forall i, i \neq k \wedge T_i\leq T_k} C_i,
	\]
	
   which implies that 
	
		\[
	m-(m-1)\gamma_{k}< \sum_{\forall \tau_i \in \tau \atop T_i \leq T_k} \frac{\lambda_{i,k}(|I|)}{|I|}.
	\]

	Reaching a contradiction with (\ref{eq:RMugly}), the lemma is proved. 
\end{proof}

In the following, we show that it is not necessary to enumerate each instance of $t\geq T_i$ to check (\ref{eq:RMugly}). 

\begin{lemma}\label{uglyreduce}
$\mathop{\rm{sup}}\limits_{t\geq T_i}\frac{work(\tau_i, t, 1)+\Delta}{t} = \mathop{\rm{sup}}\limits_{2T_i  \geq t\geq T_i} \frac{work(\tau_i, t, 1)+\Delta}{t}$.
where $\Delta \geq 0$.
\end{lemma}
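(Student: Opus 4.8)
The plan is to establish the only nontrivial direction, namely
$\sup_{t\geq T_i}\frac{work(\tau_i,t,1)+\Delta}{t}\leq \sup_{2T_i\geq t\geq T_i}\frac{work(\tau_i,t,1)+\Delta}{t}$,
since the reverse inequality is immediate as the right-hand side is a supremum over a subset of the domain. First I would fix an arbitrary $t>2T_i$ and write it in its canonical form $t=nT_i+r$ with $n=\lfloor t/T_i\rfloor\geq 2$ and remainder $r=t-nT_i\in[0,T_i)$. Unrolling the recursive definition of $work$ exactly once yields the key identity $work(\tau_i,t,1)=nC_i+w_r$, where I abbreviate $w_r:=work(\tau_i,r,1)\geq 0$. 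The goal then becomes to dominate the ratio at $t$ by a ratio attained somewhere in the first window $[T_i,2T_i]$.

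The natural candidate is the point $t^\ast=T_i+r$ carrying the same remainder, whose work is $C_i+w_r$. A direct cross-multiplication (in which the common factor $n-1\geq 1$ cancels) gives the clean equivalence
\[
\frac{nC_i+w_r+\Delta}{nT_i+r}\leq \frac{C_i+w_r+\Delta}{T_i+r}\iff C_i r\leq (w_r+\Delta)T_i .
\]
Hence when $C_i r\leq (w_r+\Delta)T_i$ the ratio at $t$ is bounded by the ratio at $t^\ast\in[T_i,2T_i)$, and that subcase is finished. The complementary case $C_i r>(w_r+\Delta)T_i$ is where matching remainders fails, and it needs a separate argument: the very inequality $(w_r+\Delta)T_i<C_i r$ forces $\frac{nC_i+w_r+\Delta}{nT_i+r}<C_i/T_i=u_i$ for \emph{every} $n$, while the left endpoint $t=T_i$ (which lies in the window, with $work(\tau_i,T_i,1)=C_i$) already achieves $\frac{C_i+\Delta}{T_i}=u_i+\Delta/T_i\geq u_i$. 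Thus the ratio at $t$ is strictly below a value attained at $T_i\in[T_i,2T_i]$.

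Combining the two subcases, every $t>2T_i$ has its ratio dominated by some value of the ratio on $[T_i,2T_i]$, and every $t\in[T_i,2T_i]$ is trivially accounted for, which gives the claimed equality. I expect the main obstacle to be precisely this second subcase: the instinct is to always fold a later period back to the point of equal remainder in the first window, but that comparison is valid only under $C_i r\leq (w_r+\Delta)T_i$; recognizing that the complementary regime is controlled not by the matched remainder but by the endpoint $t=T_i$, and verifying the strict bound by $u_i$ there, is the crux. Minor care is also needed at the boundaries: when $r=0$ the criterion $C_i r\leq (w_r+\Delta)T_i$ holds automatically and folds $t=nT_i$ back to $T_i$, while $t=2T_i$ already lies in the window, so the identities specialize consistently.
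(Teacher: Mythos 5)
Your proof is correct and follows essentially the same route as the paper's: decompose $t=aT_i+b$ into full periods plus remainder, use the cross-multiplication identity whose numerator carries the factor $(a-1)\bigl(T_i(work(\tau_i,b,1)+\Delta)-C_ib\bigr)$ to fold back to the matching-remainder point $T_i+b$, and handle the complementary regime by bounding the ratio by the value attained at $t=T_i$. The only difference is cosmetic: you split cases on the exact folding criterion $C_ib\leq (work(\tau_i,b,1)+\Delta)T_i$, while the paper splits on the slightly stronger condition $\frac{work(\tau_i,b,1)}{b}\geq u_i+\frac{\Delta}{T_i}$; both partitions make each branch go through.
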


\begin{proof}

To prove the lemma, it is sufficient to prove 
\begin{equation}\label{eq:temp}
\mathop{\rm{sup}}\limits_{t\geq 2T_i}\frac{work(\tau_i, t, 1)+\Delta}{t} \leq \mathop{\rm{sup}}\limits_{2T_i  \geq t\geq T_i} \frac{work(\tau_i, t, 1)+\Delta}{t}. 
\end{equation}

Sine $\frac{work(\tau_i, T_i, 1)+\Delta}{T_i}=u_i+\frac{\Delta}{T_i}$, we have $\mathop{\rm{sup}}\limits_{t\geq T_i}\frac{work(\tau_i, t, 1)+\Delta}{t} \geq  \mathop{\rm{sup}}\limits_{2T_i  \geq t\geq T_i} \frac{work(\tau_i, t, 1)+\Delta}{t} \geq u_i+\frac{\Delta}{T_i}$. Let $t=aT_i+b$, where $b$ is a positive number and $b\in [0, T_i)$, $a$ is an integer and $a\geq 1$. If $\frac{work(\tau_i, b, 1)}{b} < u_i+\frac{\Delta}{T_i}$, then we have 
\begin{eqnarray*}
&&\frac{work(\tau_i, t, 1)+\Delta}{t}=\frac{aC_i+work(\tau_i, b, 1)+\Delta}{aT_i+b} \\
&<&\frac{(aT_i+b)(u_i+\Delta/T_i)}{aT_i+b}=u_i+\frac{\Delta}{T_i} 
\end{eqnarray*}

  Then it is sufficient to prove (\ref{eq:temp}) by proving that for each instance $b\in [0, T_i)$ where $\frac{work(\tau_i, b, 1)}{b} \geq u_i+\frac{\Delta}{T_i}$, it satisfies:
\[
\frac{work(\tau_i, aT_i+b, 1)+\Delta}{aT_i+b} \leq \frac{work(\tau_i, T_i+b, 1)+\Delta}{T_i+b}
\]

where $a \geq 2$. Clearly, we have
\begin{eqnarray*}
&&\frac{work(\tau_i, T_i+b, 1)+\Delta}{T_i+b}-\frac{work(\tau_i, aT_i+b, 1)+\Delta}{aT_i+b}\\
&=&\frac{C_i+work(\tau_i, b, 1)+\Delta}{T_i+b}-\frac{aC_i+work(\tau_i, b, 1)+\Delta}{aT_i+b} \\
&=&\frac{(a-1)(T_i(work(\tau_i, b, 1)+\Delta)-C_ib)}{(aT_i+b)(T_i+b)} \\
&\geq& 0 ~~~~[\because \frac{work(\tau_i, b, 1)}{b} \geq u_i ~~\mbox{and}~~ a\geq 2]
\end{eqnarray*}
The lemma is proved.	
\end{proof}

Lemma \ref{l:connectworkt} can be seen as a byproduct of this paper. The schedulability of $\taskset$ under G-RM can be decided by checking (\ref{eq:RMugly}) for each task $\tau_i \in \tau$ for each discrete value of $t \in [T_i, 2T_i)$ separated by the minimum time unit. Clearly, such test has a polynomial-time complexity. Nevertheless, Lemma \ref{l:connectworkt} still results in high complexity and requires the intra-structure information, which is not always acceptable. Recall that our focus in this paper is to derive quantitative schedulability bounds for G-RM.   

In the following, we present a schedulability test where no intra-structure information is required. We begin with deriving a bound of $work(\tau_i, t, 1)$ when $t\geq T_i$, which is tighter than that in Lemma \ref{lem:pi}. In particular, tasks are divided into two groups: heavy tasks whose utilization is greater than 1 and light tasks whose utilization is no greater than 1.

%

\begin{lemma}\label{lem:loadbigt}
	For any  task $\tau_i$ and $t\geq T_i$, it holds
	\begin{equation}
	\frac{work(\tau_i, t,1)}{t} \leq \left\{
	\begin{aligned}
	& \frac{2u_{i}-\gamma_{i}}{2-\gamma_{i}}, &  u_{i}> 1 \\
	& u_{i},    &  0\leq u_{i} \leq 1.
	\end{aligned}
	\right.
	\end{equation}	
\end{lemma}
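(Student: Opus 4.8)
The plan is to split the piecewise bound into its two regimes and, in the nontrivial regime, to reduce the unbounded range $t \ge T_i$ to a single period block. For the \emph{light} case $0 \le u_i \le 1$ nothing new is required: Lemma \ref{lem:pi} already gives $work(\tau_i,t,1)/t \le u_i$ for every $t>0$, hence in particular for $t \ge T_i$, so I would simply cite it. All the effort goes into the \emph{heavy} case $u_i > 1$, where the improvement over Lemma \ref{lem:pi} stems precisely from the window now spanning at least one full period.

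First I would invoke Lemma \ref{uglyreduce} with $\Delta = 0$ to replace $\sup_{t \ge T_i} work(\tau_i,t,1)/t$ by $\sup_{T_i \le t \le 2T_i} work(\tau_i,t,1)/t$; since the ratio never exceeds its own supremum, it then suffices to bound $work(\tau_i,t,1)/t$ on the compact block $[T_i,2T_i]$. Writing $t = T_i + b$ with $b \in [0,T_i]$ and unfolding the recursive definition of $work$, I obtain $work(\tau_i,t,1) = C_i + work(\tau_i,b,1) = 2C_i - q_i(T_i-b,1)$.

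The heart of the argument is a tight description of the progress $q_i(\cdot,1)$ of $S_{\infty,1}$, for which I would establish two elementary facts valid for any DAG structure: (i) $q_i(x,1) = C_i$ for every $x \ge L_i$, since on infinitely many unit-speed processors the whole DAG completes by time $L_i$; and (ii) $q_i(x,1) \ge x$ for $0 \le x \le L_i$, since before completion the head of the remaining longest path is always eligible, so at least one unit of work is done each time unit. Substituting $x = T_i - b$, fact (i) shows $work(\tau_i,b,1) = 0$ whenever $b \le T_i - L_i$ (the window lies entirely inside the slack of the carried-in job), while fact (ii) gives $work(\tau_i,b,1) \le C_i - (T_i - b)$ when $b > T_i - L_i$. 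Maximizing over $b$ is then routine: on $[0, T_i - L_i]$ the ratio is at most $C_i/(T_i+b)$, decreasing, with maximum $u_i$ at $b=0$; on $(T_i - L_i, T_i]$ it is at most $(2C_i - T_i + b)/(T_i + b)$, which is also decreasing because $C_i > T_i$, hence maximized as $b \to (T_i - L_i)^+$, yielding $(2C_i - L_i)/(2T_i - L_i) = (2u_i - \gamma_i)/(2 - \gamma_i)$. A one-line check that $u_i \le (2u_i - \gamma_i)/(2-\gamma_i)$ when $u_i \ge 1$ shows the second value dominates, which is the claimed bound.

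The main obstacle I anticipate is identifying \emph{where} the supremum sits and resisting a single uniform estimate such as $work(\tau_i,b,1) \le \max(0, C_i - T_i + b)$: that looser bound already fails at $b=0$, where it would predict a ratio of $2u_i - 1 > u_i$. The essential structural input is therefore fact (i) in its exact form --- that $work(\tau_i,b,1)$ is \emph{exactly} zero, not merely small, throughout the slack interval $b \le T_i - L_i$ --- so that the only binding peak of the ratio is the ``knee'' at $b = T_i - L_i$. Making facts (i) and (ii) rigorous for arbitrary DAG topology is the other point requiring care.
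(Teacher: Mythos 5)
Your proof is correct and rests on the same core argument as the paper's: the two facts about $q_i$ (namely $q_i(x,1)=C_i$ for $x\geq L_i$, and $q_i(x,1)\geq x$ for $x\leq L_i$) and the observation that the ratio peaks at the knee $b=T_i-L_i$, yielding $(2C_i-L_i)/(2T_i-L_i)=(2u_i-\gamma_i)/(2-\gamma_i)$, with the routine check that this dominates $u_i$ when $u_i\geq 1$. The only structural difference is how you dispose of the two reductions: you cite Lemma \ref{lem:pi} for the light case and Lemma \ref{uglyreduce} (with $\Delta=0$) to compress the range to $[T_i,2T_i]$, whereas the paper re-derives the light case inline and handles arbitrary $t\geq T_i$ directly by writing $t=t_1+t_2$ with $t_1=\lfloor t/T_i\rfloor T_i$ and exploiting that $\frac{u_i(t_1+T_i)-(T_i-t_2)}{(t_1+T_i)-(T_i-t_2)}$ is decreasing in $t_1$ when $u_i>1$. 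Both routes are valid --- Lemma \ref{uglyreduce} precedes this lemma in the paper and does not depend on it, so there is no circularity; your version is slightly more modular, the paper's is self-contained, and your closing remark about needing $work(\tau_i,b,1)$ to be exactly zero on the slack interval (rather than the looser $\max(0,C_i-T_i+b)$) matches precisely the role of the paper's $work=u_it_1$ case.
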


\begin{proof}
	For any $t$, we split it into two parts: $t=t_1+t_2$, where $t_1=\lfloor\frac{t}{T_i}\rfloor T_i$ and $t_2=t-t_1$. Clearly, $t_1\geq T_i$ and $t_2<T_i$.
	
	According to Definition \ref{def:maxLoad}:
	
	\begin{eqnarray*}
		work(\tau_i,t)&=&C_i\lfloor\frac{t}{T_i}\rfloor+C_i-q_i(T_i-t_2,1) \\
		&=&u_it_1+C_i-q_{i}(T_i-t_2,1).
	\end{eqnarray*}

	Since $S_{\infty,1}$ finishes $C_i$ at time $L_i$, then for any $T_i\geq t>L_i$, we have $q_{k}(t) = C_i$. For any $t < L_i$, at every time
	instant in the interval $[r_i, r_i +t)$, there is at least one processor executing the workload of $\tau_i$, and thus, $q_{i}(t,1)\geq t$. Then we have:
	
	\begin{equation}
	q_{i}(t,1) \leq \left\{
	\begin{aligned}
	&t, &  t\leq L_i \\
	& C_i,    & T_i\geq t>L_i.
	\end{aligned}
	\right.
	\end{equation}
	
	Then we have:
	
	\begin{equation*}
	{work(\tau_i, t,1)} \leq \left\{
	\begin{aligned}
	& u_it_1+t_2+C_i-T_i, & t_2\geq T_i-L_i \\
	& u_{i}t_1    &  else.
	\end{aligned}
	\right.
	\end{equation*}
	
	We consider two cases:
	\begin{itemize}
		\item $u_i\leq 1$. If $t_2\geq T_i-L_i$, we have

		\begin{eqnarray*}
			\frac{work(\tau_i,t)}{t}&=&\frac{u_it_1+t_2+C_i-T_i}{t} \\
			&\leq&\frac{t+C_i-T_i}{t} \\
			&=&\frac{C_i-T_i}{t}+1 \\
			&\leq&\frac{C_i-T_i}{D_i-L_i}+1\\
			&=& \frac{u_i-\gamma_i}{1-\gamma_i}
		\end{eqnarray*}
		
		If $t_2< T_i-L_i$. We have $\frac{work(\tau_i,t)}{t}=\frac{u_it_1}{t}\leq u_i$. Since $u_i \leq 1$, $\frac{u_i-\gamma_i}{1-\gamma_i}\leq u_i$. In summary,  $\frac{work(\tau_i,t)}{t}\leq u_i$ when $u_i\leq 1$. 
		
		In summary, $\frac{work(\tau_i,t)}{t} \leq u_i$ when $u_i \leq 1$.
		
		\item $u_i> 1$.  If $t_2\geq T_i-L_i$, we have 
		
		\begin{eqnarray*}
			\frac{work(\tau_i,t)}{t}&=&\frac{u_it_1+t_2+C_i-T_i}{t} \\
			&=& \frac{u_i(t_1+T_i)-(T_i-t_2)}{(t_1+T_i)-(T_i-t_2)}\\
			&\leq& \frac{2u_iT_i-(T_i-t_2)}{2T_i-(T_i-t_2)} \\
			&\leq& \frac{2u_iT_i-L_i}{2T_i-L_i} \\
			&=& \frac{2u_i-\gamma_i}{2-\gamma_i}
		\end{eqnarray*}
		
		If $t_2< T_i-L_i$. We have $\frac{work(\tau_i,t)}{t}=\frac{u_i t_1}{t}\leq u_i$. Since $u_i > 1$, we have $\frac{2u_i-\gamma_i}{2-\gamma_i}>u_i$.
		
		In summary, $\frac{work(\tau_i,t)}{t} \leq \frac{2u_i-\gamma_i}{2-\gamma_i}$ when $u_i > 1$.	
	\end{itemize}
	In both cases, the lemma is proved. 
\end{proof}

Now we can combine Lemma \ref{l: keylemma}, Lemma \ref{l:maximumwork} and Lemma \ref{lem:loadbigt} to obtain a simple schedulability test condition for task set $\tau$ on unit-speed processors.

%
%
%
%
%
%
%

\begin{lemma}\label{thm:tighter}
		A task set $\tau$ satisfying necessity conditions (\ref{eq:L<D}) and (\ref{eq:U<m}) is schedulable by G-RM on $m$ unit-speed processors if the following condition is satisfied:
	\begin{equation}
\sum_{\forall \tau_i \in \tau \atop u_{i} > 1} \frac{2u_{i} - \gamma_{i}}{2 - \gamma_{i}} + \sum_{\forall \tau_i \in \tau \atop u_{i}\leq 1} u_{i} \leq  m-\gamma_{max}(m-2)-U_{\sum} 
	\label{eq:G-RM}
	\end{equation}
\end{lemma}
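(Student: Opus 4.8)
The plan is to argue by contradiction, building directly on the three preceding lemmas. Suppose $\tau$ satisfies (\ref{eq:L<D}) and (\ref{eq:U<m}) together with condition (\ref{eq:G-RM}), yet is \emph{not} schedulable under G-RM on $m$ unit-speed processors. Then Lemma \ref{l: keylemma} supplies a task $\tau_k$ and an interval $I=[t,d(J_k))$ with $|I|\geq T_k$ such that $\work^{\jobset, I} > (m-(m-1)\gamma_k)\,|I|$. I would feed this interval into Lemma \ref{l:maximumwork} to obtain $\work^{\jobset, I}\leq \sum_{T_i\leq T_k} work(\tau_i,|I|,1) + \sum_{i\neq k,\, T_i\leq T_k} C_i$, so that chaining the two and dividing by $|I|$ gives the strict inequality
\begin{equation*}
m-(m-1)\gamma_k \;<\; \sum_{T_i\leq T_k}\frac{work(\tau_i,|I|,1)}{|I|} + \sum_{i\neq k,\, T_i\leq T_k}\frac{C_i}{|I|}.
\end{equation*}

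Next I would bound the two sums separately, exploiting that $|I|\geq T_k\geq T_i$ for every index that appears. Because $|I|\geq T_i$, Lemma \ref{lem:loadbigt} applies termwise, replacing $work(\tau_i,|I|,1)/|I|$ by $\frac{2u_i-\gamma_i}{2-\gamma_i}$ when $u_i>1$ and by $u_i$ when $u_i\leq 1$. For the leftover release terms, the key observation is that $\frac{C_i}{|I|}=\frac{u_iT_i}{|I|}\leq \frac{u_iT_i}{T_k}\leq u_i$, again using $|I|\geq T_k$ and $T_i\leq T_k$; hence $\sum_{i\neq k,\, T_i\leq T_k}\frac{C_i}{|I|}\leq U_{\sum}-u_k$. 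Enlarging the first sum to range over all tasks (the added terms are non-negative) then yields
\begin{equation*}
m-(m-1)\gamma_k \;<\; \sum_{u_i>1}\frac{2u_i-\gamma_i}{2-\gamma_i}+\sum_{u_i\leq 1}u_i + U_{\sum}-u_k,
\end{equation*}
which rearranges to a strict lower bound on precisely the left-hand side of (\ref{eq:G-RM}), namely $\sum_{u_i>1}\frac{2u_i-\gamma_i}{2-\gamma_i}+\sum_{u_i\leq 1}u_i > m-(m-1)\gamma_k-U_{\sum}+u_k$.

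Finally I would close the contradiction by comparing this lower bound against the hypothesis (\ref{eq:G-RM}). It suffices to verify the purely algebraic inequality $u_k \geq (m-1)\gamma_k-(m-2)\gamma_{max}$: granting it, the derived lower bound is at least $m-\gamma_{max}(m-2)-U_{\sum}$, which contradicts (\ref{eq:G-RM}). I expect this last reduction to be the crux of the argument, and it is where two elementary structural facts must be combined: first $u_k\geq \gamma_k$ (since $C_k\geq L_k$), and second $\gamma_k\leq\gamma_{max}$ by definition of the maximum tensity. Together, for $m\geq 2$ these give $(m-1)\gamma_k-(m-2)\gamma_{max}\leq (m-1)\gamma_k-(m-2)\gamma_k=\gamma_k\leq u_k$, as required. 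The main obstacle is therefore not any heavy computation but getting the accounting of the $C_i$ terms right---in particular, recognizing that $\gamma_{max}$ rather than $\gamma_k$ must appear in the threshold exactly to absorb the slack $(m-2)(\gamma_{max}-\gamma_k)$ incurred when the interval-specific factor $\gamma_k$ is replaced by the global quantity $\gamma_{max}$ in passing from a per-task to a uniform condition.
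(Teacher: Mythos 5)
Your proof is correct and takes essentially the same route as the paper's: the same contradiction setup chaining Lemma \ref{l: keylemma} with Lemma \ref{l:maximumwork}, the termwise application of Lemma \ref{lem:loadbigt} justified by $|I|\geq T_k\geq T_i$, the bound $C_i/|I|\leq u_i$ giving the $U_{\sum}-u_k$ slack term, and the closing algebra built on $u_k\geq\gamma_k$ and $\gamma_k\leq\gamma_{max}$ (the paper merely applies these two facts in the opposite order, first replacing $u_k$ by $\gamma_k$ and then $\gamma_k(m-2)$ by $\gamma_{max}(m-2)$). Your explicit isolation of the inequality $u_k\geq(m-1)\gamma_k-(m-2)\gamma_{max}$, with the $m\geq 2$ caveat stated, is if anything slightly more careful than the paper's presentation.
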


\begin{proof}
		We prove the lemma by contradictions. Suppose that conditions in Lemma \ref{eq:G-RM} are satisfied and a task set $\taskset$ is failed by G-RM on $m$ unit-speed processors.

	Then from Lemma \ref{l: keylemma} and Lemma \ref{l:maximumwork}, we know there exists a task $\tau_k$ and a time interval of $I$, where $|I|\geq T_k$ and	

	\[
	\sum_{\forall \tau_i \in \tau \atop T_i \leq T_k}work(\tau_k,|I|,1)>\left( m-(m-1)\gamma_{max}-\frac{\displaystyle\sum_{\forall i, i \neq k \atop T_i\leq T_k} C_i}{|I|}\right)|I|
	\]
	
	Since $|I|\geq T_k$, $\frac{C_i}{|I|}\leq u_i$ when $T_i\leq T_k$. Therefore 
	\[
	\frac{\sum_{\forall i, i \neq k \wedge T_i\leq T_k} C_i}{|I|}\leq \sum_{\forall i, i \neq k \wedge T_i\leq T_k} u_i.
	\]
	Then 
	\[
	\sum_{\forall \tau_i \in \tau, T_i \leq T_k} \frac{work(\tau_k,|I|,1)}{|I|}>\left( m-(m-1)\gamma_k- \sum_{\forall i, i \neq k \atop T_i\leq T_k} u_i\right)
	\]
	
Combining with Lemma \ref{lem:loadbigt} we have
\[
	\sum_{T_i\leq T_k \atop u_{i} > 1} \frac{2u_{i} - \gamma_{i}}{2 - \gamma_{i}} + \sum_{T_i\leq T_k \atop u_{i}\leq 1} u_{i} > m-\gamma_{k}(m-1)-\sum_{\forall i, i \neq k \atop T_i\leq T_k} u_i
\]

	Then it must hold that
\begin{eqnarray*} \label{eq:midresult}
&&\sum_{\forall \tau_i \in \tau \atop u_{i} > 1} \frac{2u_{i} - \gamma_{i}}{2 - \gamma_{i}} + \sum_{\forall \tau_i \in \tau \atop u_{i}\leq 1} u_{i} > m-\gamma_{k}(m-1)-\sum_{\forall i, i \neq k \atop T_i\leq T_k} u_i \\
&>& m-\gamma_{k}(m-1)-(U_{\sum}-u_k) \\
&\geq& m-\gamma_{k}(m-1)-(U_{\sum}-\gamma_k) \\
&=& m-\gamma_{k}(m-2)-U_{\sum} \\
&\geq& m-\gamma_{max}(m-2)-U_{\sum} 
\end{eqnarray*}

%
\end{proof}

If all tasks are light and we treat them all as sequential tasks, i.e, $\forall \tau_k \in \tau : u_{k} = \gamma_{k} \leq 1$, then (\ref{eq:G-RM}) perfectly degrades to the classical utilization bound of G-RM for scheduling sequential tasks \cite{Bertogna2005New}:

\begin{equation}\label{e:sequential1}
U_{\sum} \leq \frac{(1-u_{max})m}{2}+u_{max}.
\end{equation}

Moreover, if all tasks are light but we treat them as parallel tasks, i.e.,$\forall \tau_k \in \tau : u_{k} = \gamma_{k} \leq 1$, then (\ref{eq:G-RM}) degrades to 

\begin{equation}\label{eq:sequential2}
U_{\sum} \leq \frac{(1-\gamma_{max})m}{2}+\gamma_{max}.
\end{equation}

Counter-intuitively, the schedulability test condition (\ref{eq:sequential2}) can accept more task sets
than (\ref{e:sequential1}) since $\gamma_{max}$
is in general smaller than $u_{\max}$. 

Therefore, 
the parallelism of light tasks is indeed useful to improve the schedulability under G-RM (the state-of-the-art techniques treat light tasks as sequential tasks). 

At last, without distinguishing heavy and light tasks, we derive a utilization-tensity bound for G-RM as follows.

\begin{theorem}\label{thm:ub}
	A task set $\tau$ is schedulable under G-RM on $m$ unit-speed processors if $\forall \tau_i \in \tau : L_i \leq T_i$ and 

		\begin{equation}\label{eq:G-RMtigher}
		U \leq \frac{(1-\gamma_{max})(2-\gamma_{max})}{4-\gamma_{max}}
		\end{equation}

	where $\displaystyle \gamma_{max} = \max_{\tau_i \in \tau} \{\gamma_{i}\}$ is the maximum tensity and $U=U_{\sum}/m$ is the normalized utilization of the task system $\tau$.
\end{theorem}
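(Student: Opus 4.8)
The plan is to show that the two hypotheses of Theorem~\ref{thm:ub}---namely $L_i\le T_i$ for all tasks and $U\le\frac{(1-\gamma_{max})(2-\gamma_{max})}{4-\gamma_{max}}$---already force the tighter schedulability condition~(\ref{eq:G-RM}) of Lemma~\ref{thm:tighter} to hold, whence schedulability is immediate. Writing $\gamma=\gamma_{max}$ throughout, I would take~(\ref{eq:G-RM}) as the target and upper-bound its left-hand side using only the aggregate quantities $U_{\sum}$ and $\gamma$, then verify that the stated utilization-tensity condition is strong enough.

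First I would clean up the heavy-task contributions. Since $\frac{2u-\gamma'}{2-\gamma'}$ is increasing in $\gamma'$ whenever $u>1$ (its derivative equals $\frac{2(u-1)}{(2-\gamma')^2}\ge0$), every $\gamma_i\le\gamma$ may be replaced by $\gamma$, so that each heavy term satisfies $\frac{2u_i-\gamma_i}{2-\gamma_i}\le\frac{2u_i-\gamma}{2-\gamma}=u_i+\frac{\gamma(u_i-1)}{2-\gamma}$. Summing over heavy tasks and adding the light terms $u_i$ gives the uniform estimate
\[
\sum_{u_i>1}\frac{2u_i-\gamma_i}{2-\gamma_i}+\sum_{u_i\le1}u_i\le U_{\sum}+\frac{\gamma}{2-\gamma}\sum_{u_i>1}(u_i-1).
\]
Bounding $\sum_{u_i>1}(u_i-1)\le U_{\sum}-1$ (the heavy utilizations sum to at most $U_{\sum}$ and each heavy task subtracts at least one; valid whenever $U_{\sum}\ge1$), the requirement~(\ref{eq:G-RM}) is implied by $U_{\sum}+\frac{\gamma(U_{\sum}-1)}{2-\gamma}\le m-\gamma(m-2)-U_{\sum}$. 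Multiplying through by $(2-\gamma)$, the $\gamma U_{\sum}$ terms cancel and the inequality collapses to $(4-\gamma)U_{\sum}\le(2-\gamma)\bigl(m(1-\gamma)+2\gamma\bigr)+\gamma$. Dividing by $m$ yields $U\le\frac{(1-\gamma)(2-\gamma)}{4-\gamma}+\frac{\gamma(5-2\gamma)}{m(4-\gamma)}$, whose second summand is nonnegative; hence the $m$-independent bound~(\ref{eq:G-RMtigher}) is a (slightly conservative) sufficient condition, which is exactly the claim.

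I expect the main obstacle to be the two places where the reduction to $(U_{\sum},\gamma)$ must be justified rather than assumed. The first is establishing that $\sum_{u_i>1}(u_i-1)\le U_{\sum}-1$ is the genuine worst case, i.e.\ that concentrating all utilization into a single heavy task maximizes the left side of~(\ref{eq:G-RM}) for fixed $U_{\sum}$; this rests on the monotonicity-in-$\gamma$ substitution above together with the fact that the per-task contribution is convex and has slope exceeding one for heavy tasks, so spreading utilization over more (or lighter) tasks only decreases the sum. The second is disposing of the light-only regime $U_{\sum}<1$, where no heavy task exists, the left side of~(\ref{eq:G-RM}) is exactly $U_{\sum}$, and the condition holds trivially because $m(1-\gamma)+2\gamma\ge2>2U_{\sum}$ for $m\ge2$. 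Handling these two edge cases cleanly, rather than the routine algebra of the collapse, is where the care is needed.
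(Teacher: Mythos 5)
Your proposal is correct and takes essentially the same route as the paper's own proof: both reduce the theorem to Lemma~\ref{thm:tighter}, replace each $\gamma_i$ by $\gamma_{max}$ in the heavy-task terms, and collapse the resulting inequality in $(U_{\sum},\gamma_{max},m)$ to the $m$-independent bound~(\ref{eq:G-RMtigher}). The only differences are cosmetic: the paper argues by contradiction and bounds the left side of~(\ref{eq:G-RM}) more crudely by $\frac{2U_{\sum}}{2-\gamma_{max}}$, which avoids your case split on $U_{\sum}<1$ (where, incidentally, your inequality $m(1-\gamma_{max})+2\gamma_{max}\ge 2$ assumes $m\ge 2$, although the $m=1$ case is still rescued by the utilization hypothesis), and your bound $\sum_{u_i>1}(u_i-1)\le U_{\sum}-1$ needs no extremal or convexity argument---it is immediate since each heavy task contributes at least $1$ to $U_{\sum}$.
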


\begin{proof}
	We prove the theorem by contradictions. Suppose that conditions in Theorem \ref{thm:ub} are satisfied and a task set $\taskset$ is failed by G-RM on $m$ unit-speed processors.

	Then from Lemma \ref{thm:tighter} we know 
	\begin{equation}\label{eq:temp3}
\sum_{\forall \tau_i \in \tau \atop u_{i} > 1} \frac{2u_{i} - \gamma_{i}}{2 - \gamma_{i}} + \sum_{\forall \tau_i \in \tau \atop u_{i}\leq 1} u_{i} >  m-\gamma_{max}(m-2)-U_{\sum} 
	\end{equation}

	Since for any task $\tau_i \in \taskset$, $\gamma_{max} \geq \gamma_{i}$, We have
	\begin{eqnarray}
	&& \sum_{u_{i} > 1} \frac{2u_{i}-\gamma_{i}}{2-\gamma_{i}} +
	\sum_{u_{i} \leq 1} u_{i} \leq \sum_{u_{i} > 1} \frac{2u_{i}-\gamma_{i}}{2- \gamma_{max}} + \sum_{u_{i} \leq 1} u_{i} \nonumber \\
	&=&   \frac{ 2\sum_{u_{i}>1}u_i-\sum_{u_{i} > 1} \gamma_{i}+2\sum_{u_{i}\leq 1}u_i- \gamma_{max} \sum_{u_i \leq 1} u_i  }{2 - \gamma_{max}} \nonumber   \\
	&=&   \frac{2 U_{\sum}-( \sum_{u_{i} > 1} \gamma_{i} + \gamma_{max} \sum_{u_i \leq 1} u_i  )}{2 - \gamma_{max}} \nonumber   \\
	&\leq & \frac{ 2 U_{\sum}}{2 - \gamma_{max}} \nonumber
	\end{eqnarray}

	Then (\ref{eq:temp3}) implies:
	\begin{eqnarray*}
	&&\frac{ 2 U_{\sum}}{2 - \gamma_{max}} > m-\gamma_{max}(m-2)-U_{\sum} \nonumber \\
	&\Leftrightarrow& \frac{ (4-\gamma_{max}) U_{\sum}}{2 - \gamma_{max}} > m-\gamma_{max}(m-2)  \\
	&\Leftrightarrow& U_{\sum} > \frac{(1-\gamma_{max})(2-\gamma_{max})m+2\gamma_{max}(2-\gamma_{max})}{4-\gamma_{max}} \\
		&\Rightarrow& U_{\sum} > \frac{(1-\gamma_{max})(2-\gamma_{max})m}{4-\gamma_{max}}  \\
			&\Leftrightarrow& U > \frac{(1-\gamma_{max})(2-\gamma_{max})}{4-\gamma_{max}} 
	\end{eqnarray*}
	
Reaching a contradiction with the assumption that (\ref{eq:temp3}) holds, and the theorem is proved.	
\end{proof}

\subsection{Capacity Augmentation Bound}\label{sec:CAB}
The best known capacity augmentation bound for our considered problem is $2+\sqrt{3}\approx 3.732$ \cite{li2014analysis}.
In the following we improve it to $\frac{\sqrt{33}+7}{4}\approx 3.186$ based on the utilization-tensity bound in Theorem \ref{thm:ub}.

%
%
\begin{theorem}
	G-RM has a capacity augmentation bound of $\frac{\sqrt{33}+7}{4}$ for scheduling DAG tasks with implicit deadlines on $m$ unit-speed processors.	

\end{theorem}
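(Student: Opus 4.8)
The plan is to combine the utilization-tensity bound of Theorem \ref{thm:ub} with the definition of the capacity augmentation bound (Definition \ref{l:capacitybound}). Recall that to establish a capacity augmentation bound of $\rho$, I must show that every task set satisfying $\forall \tau_i \in \tau : L_i \leq T_i/\rho$ and $U_{\sum} \leq m/\rho$ is schedulable under G-RM on $m$ unit-speed processors. The first test forces $\gamma_i = L_i/T_i \leq 1/\rho$ for every task, hence $\gamma_{max} \leq 1/\rho$; the second forces the normalized utilization $U = U_{\sum}/m \leq 1/\rho$. Since $\rho > 1$, the premise $L_i \leq T_i/\rho \leq T_i$ required by Theorem \ref{thm:ub} is automatically met, so it only remains to verify the utilization condition $U \leq \frac{(1-\gamma_{max})(2-\gamma_{max})}{4-\gamma_{max}}$.

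First I would set $x = 1/\rho$ and write $f(\gamma) = \frac{(1-\gamma)(2-\gamma)}{4-\gamma}$. The key observation is that $f$ is monotonically decreasing on $[0,1]$ (a short derivative check shows the sign of $f'$ is governed by $-\gamma^2 + 8\gamma - 10$, which is negative throughout this range), so the worst case over all admissible $\gamma_{max} \leq x$ occurs at $\gamma_{max} = x$. Because $U \leq x$ as well, it suffices to choose $\rho$ so that $x \leq f(x)$; then $U \leq x \leq f(x) \leq f(\gamma_{max})$, and Theorem \ref{thm:ub} yields schedulability.

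Next I would solve the inequality $x \leq f(x)$, i.e. $x(4-x) \leq (1-x)(2-x)$, which simplifies to $2x^2 - 7x + 2 \geq 0$. Its roots are $x = \frac{7 \pm \sqrt{33}}{4}$, and since $x = 1/\rho < 1$ the binding constraint is $x \leq \frac{7 - \sqrt{33}}{4}$. The smallest (hence best) admissible $\rho$ is therefore $\rho = \frac{4}{7-\sqrt{33}} = \frac{7+\sqrt{33}}{4}$, where the last equality follows by rationalizing the denominator. At this value the quadratic vanishes, so $x = f(x)$ holds with equality, which confirms that the chosen $\rho$ is exactly the threshold at which the argument goes through.

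I expect the only genuinely delicate point to be establishing the monotonicity of $f$ on $[0,1]$ and correctly tracking the direction of the inequalities when passing from the capacity-augmentation premise to the utilization-tensity condition; once the worst case $\gamma_{max} = U = 1/\rho$ is pinned down, the remainder reduces to solving a single quadratic and reading off its relevant root.
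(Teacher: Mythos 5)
Your proposal is correct and follows essentially the same route as the paper: both reduce the claim to Theorem \ref{thm:ub} by noting that the premise of Definition \ref{l:capacitybound} forces $\gamma_{max} \leq 1/\rho$ and $U \leq 1/\rho$, then exploit the monotonic decrease of $f(\gamma)=\frac{(1-\gamma)(2-\gamma)}{4-\gamma}$ on the relevant range (the paper asserts this for $\gamma \leq 4-\sqrt{6}$, which matches your derivative computation) and the fact that $\frac{7-\sqrt{33}}{4}$ is a fixed point of $f$. The only cosmetic difference is that you derive the constant by solving $2x^2-7x+2\geq 0$, whereas the paper states the value and verifies it by substitution.
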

\begin{proof}

 From definition \ref{l:capacitybound}, we need to prove that $\tau$ is schedulable if the following condition holds:
	
	\begin{equation}\label{eq: conditioncapacity}
	\forall \tau_i \in \tau : L_{i} \leq T_i/(\frac{\sqrt{33}+7}{4})  ~\wedge~ U_{\sum}  \leq  m / (\frac{\sqrt{33}+7}{4})
   \end{equation}
   
   From condition (\ref{eq: conditioncapacity}), we have $\gamma_{max}\leq \frac{7-\sqrt{33}}{4}$ and $U\leq \frac{7-\sqrt{33}}{4}$. 
   
   Since the value of $\frac{(1-\gamma_{max})(2-\gamma_{max})}{4-\gamma_{max}}$ is monotonically decreasing as $\gamma_{max}$ when $\gamma_{max}\leq 4-\sqrt{6} \approx 1.6$, we have 
   
   \begin{eqnarray*}
   \frac{(1-\gamma_{max})(2-\gamma_{max})}{4-\gamma_{max}} &\geq& \frac{(1-\frac{7-\sqrt{33}}{4})(2-\frac{7-\sqrt{33}}{4})}{4-\frac{7-\sqrt{33}}{4}} \\
   &=&\frac{7-\sqrt{33}}{4}
   \end{eqnarray*}
when $\gamma_{max}\leq \frac{7-\sqrt{33}}{4}$.
Since $U\leq \frac{7-\sqrt{33}}{4}$, we have
\[
	U \leq \frac{7-\sqrt{33}}{4} \leq \frac{(1-\gamma_{max})(2-\gamma_{max})}{4-\gamma_{max}}
\]

By Corollary \ref{thm:ub}, we know $\tau$ is schedulable.
\end{proof}

The graphical representation of the allowed parameter space $U \leq \frac{(1-\gamma_{max})(2-\gamma_{max})}{4-\gamma_{max}}$ is the area below the orange curve in Fig. \ref{f:bounds}, which is substantially larger than the allowed parameter space of the capacity augmentation bound (the area in the orange dash rectangular). we can see that task sets with small tensity but high utilization and task sets with high tensity but small utilization can still be accepted. In general, the smaller $\gamma_{max}$ is, the higher total utilization can be tolerated (also the other way around).

It is also interesting to get a better understanding of the
absolute and relative performance of G-EDF and G-RM for
DAG task systems. Besides for the empirical evaluations, in
previous work, capacity augmentation bound and speed up
factor are used as the theoretic metrics to evaluate the relative
performance of G-EDF and G-RM for parallel tasks, and in
general G-EDF is considered to perform better than G-RM
since G-EDF has both capacity augmentation bound and speed
up factor smaller than G-RM. However, this impression may
be misleading.

As shown in Fig. \ref{f:bounds}, G-EDF can tolerant a task set with greater normalized utilization than G-RM with small tensity whereas G-RM can tolerant a task set with greater normalized utilization than G-EDF when task set have great tensity.

\section{EXPERIMENTS}
\label{sec:evaluation}

\begin{figure*}
	\centering
	\subfigure[normalized utilization]{\includegraphics[width=2.3in]{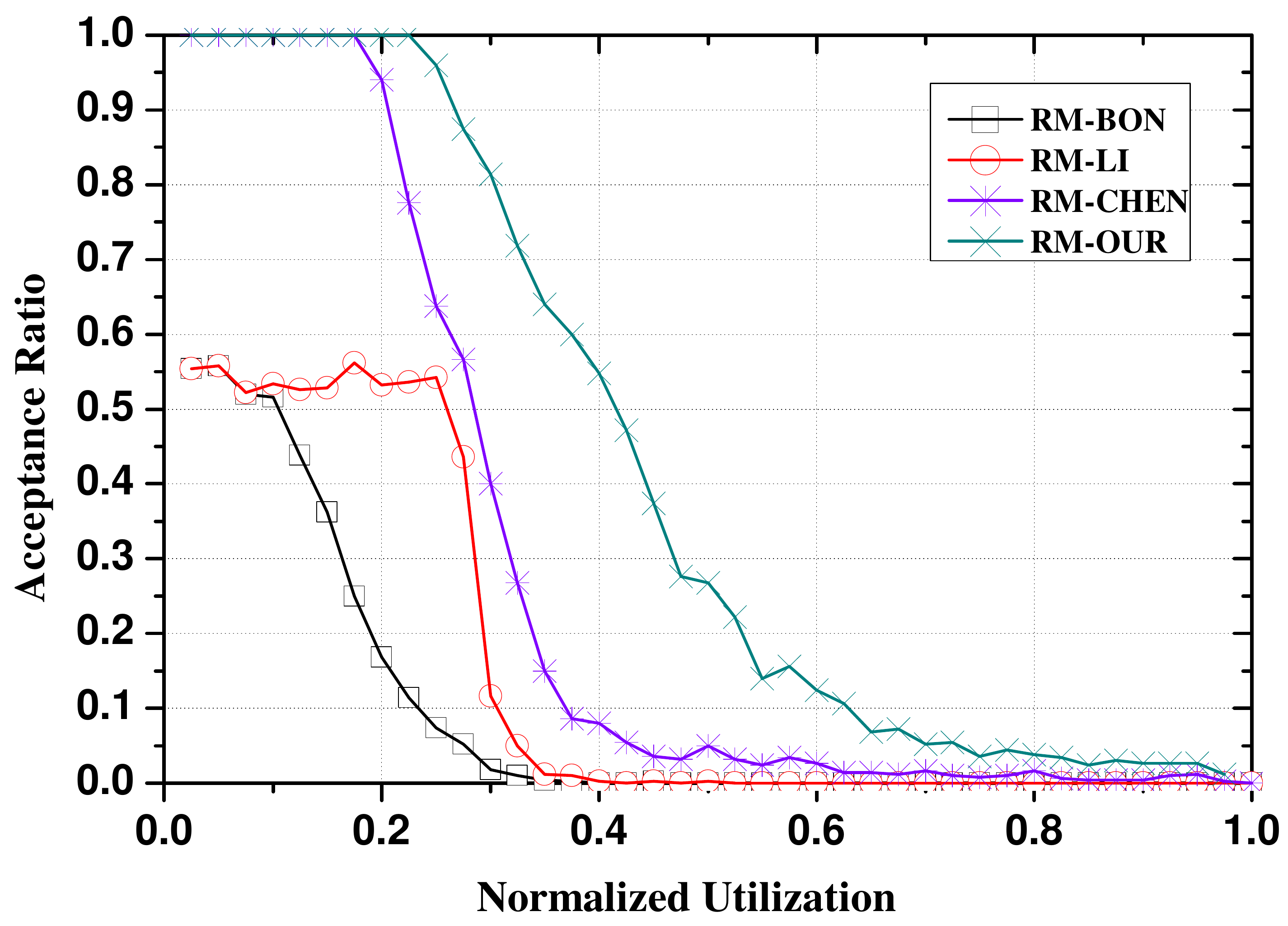}}
	\hspace{0.01in}
	\subfigure[upper bound of tensity]{\includegraphics[width=2.3in]{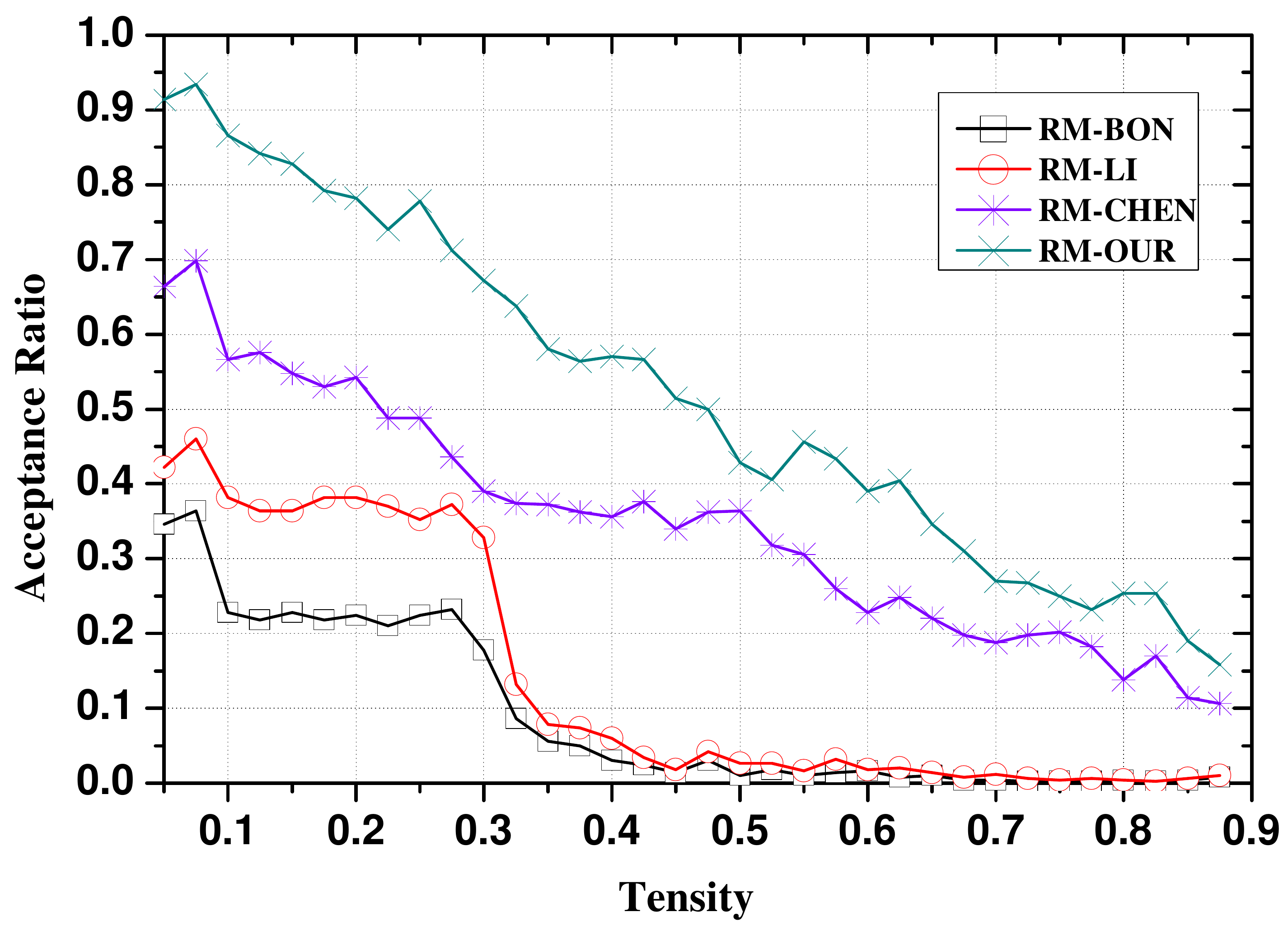}}
	\hspace{0.01in}
	\subfigure[number of tasks]{\includegraphics[width=2.3in]{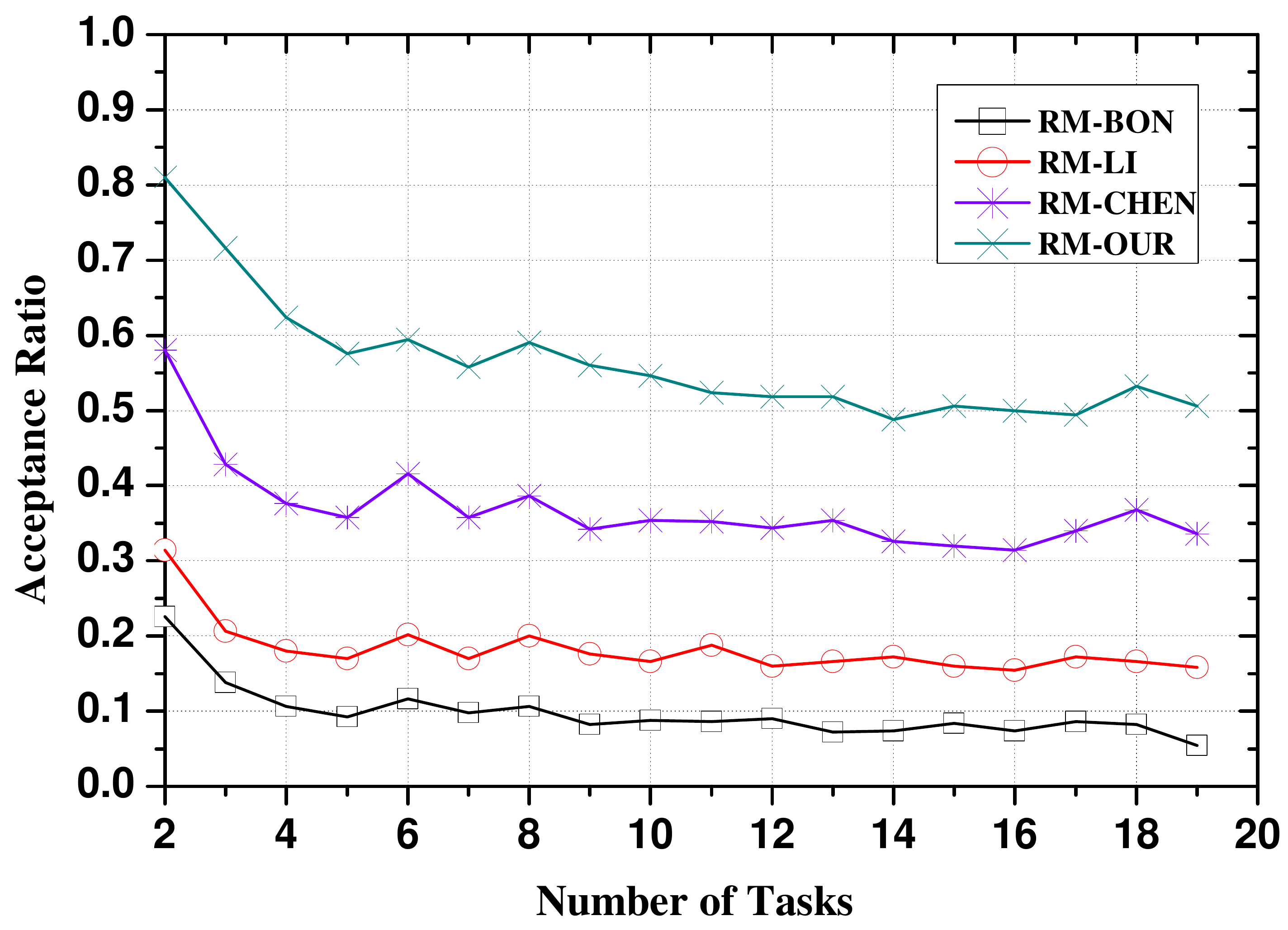}}
	\caption{Comparison of acceptance ratio of G-RM scheduling in different dimensions.}
	\label{fig:heavy}
\end{figure*}

In this section, we evaluate the performance of our proposed methods. In particular, we compare our tests in Theorem \ref{thm:ub}, denoted as RM-OUR, with other state-of-the-art G-RM schedulability bounds: (i) The schedulability test based on capacity augmentation bounds for G-RM scheduling in \cite{li2014analysis}, denoted by RM-LI. \revise{(ii) The utilization-based test in \cite{chen2014capacity}, denoted by RM-CHEN.} (iii) The schedulability test for G-DM in \cite{bonifaci2013feasibility}, which can also be used for G-RM by setting the deadline of each task equal to its period, denoted by RM-BON. 

Other methods for G-RM not included in our comparison either have high complexity (with respect to linear-time) or require intra-structure information. Recall that, our focus in this paper is to provide schedualbility bounds.

The task sets are generated using the Erd\"{o}s-R\'{e}nyi method $G(n_i , p)$ \cite{cordeiro2010random}.
For each task, the number of vertices is randomly chosen in the range $[50, 150]$.
The worst-case execution time of each vertex
is randomly picked in the range $[20, 50]$. For each possible edge we generate a random value in the range $[0, 1]$
and add the edge to the graph only if the generated value is less than a predefined threshold $p$. The same as in \cite{saifullah2014parallel},
we also add a minimum number of additional edges to make a task graph weakly connected. \revise{The period of each task $\tau_i$ is generated according to its tensity $L_i/T_i$. For each task set, we first generate an upper bound of tensity $\gamma_{up}$ which is picked in the range of $(0, 1)$. Then for each task $\tau_i$, its tensity is randomly chosen in the range of $(0, \gamma_{up}]$.} For each task set, we randomly generate $n$ tasks, where $n$ is in the range $[2, 10]$. For each parameter configuration, we generate 1000 task sets.


Fig.\ref{fig:heavy}.(a) compares the acceptance ratio of task sets with different normalized utilization,  where $\gamma_{up}$ is randomly picked from $[0.1, 0.6]$ and $n$ is randomly chosen from $[2, 10]$. The number of processors is calculated by $m=\lceil\sum\frac{C_i}{T_i}/ U \rceil$ where $U$ is the normalized utilization indicates the $x$ axis in Fig.\ref{fig:heavy}.(a). 
Fig.\ref{fig:heavy}.(b) compares the acceptance ratio of task sets with different $\gamma_{up}$. Fig.\ref{fig:heavy}.(b) follows the same setting as Fig.\ref{fig:heavy}.(a), but task periods are generated according to different value of $\gamma_{up}$ (corresponding to the x-axis). The normalized utilization of each task set is randomly chosen from $[0.1, 0.6]$. Fig.\ref{fig:heavy}.(c) compares the acceptance ratio of task sets with different number of tasks. Fig.\ref{fig:heavy}.(c) follows the same setting as Fig.\ref{fig:heavy}.(a), but tasks are generated with different number of tasks $n$ (corresponding to the x-axis) and the normalized utilization of each task set is randomly chosen from $[0.1,6]$. Since the acceptance ratio could be always high for some tests at a low normalized utilization/tensity and could be quite low for some tests at a high normalized utilization/tensity, e.g., no task sets are schedulable under RM-LI if $U>0.5$, choosing a fixed utilization/tensity will be hard to identify the trends. Thus we choose normalized utilization/tensity in a range to better illustrate the difference between different approaches.

In Fig.\ref{fig:heavy}.(a), we can see the schedulability of all tests decrease as the normalized utilization increases. In Fig.\ref{fig:heavy}.(b), we can see the schedulability of all tests decrease as the tensity increases. These two results are consistent with that observed from our utilization-tensity bound. In Fig.\ref{fig:heavy}.(c), we can see the schedulability of all tests decrease as the number of tasks increases. The reason is that, when other parameters are fixed, the more tasks the greater maximum tensity could be, and task with great tensity hurts the schedulability. In general, the experiment results show that our utilization-tensity test outperforms other tests in different dimensions. 


\section{Conclusions}

%
In this paper, we develop \emph{utilization-tensity bounds}
to check schedulability of DAG task systems under GEDF. Compared with the state-of-the-art analysis techniques based on capacity augmentation bound and response time analysis, this new result performs better in schedulability test. Moreover, we present a new scheduling algorithm by applying the above result to the federated scheduling paradigm to improve the system schedulability. The experiment results indicate that our new algorithm outperforms federated scheduling with respect to schedulability, especially for those task sets with small tensities.

\section*{Acknowledgment}
This work was supported 
by Hong Kong RGC (ECS25204216
and 1-ZVJ2) and
NSF of China (No.61672147, 61672140 and 61532007).

\bibliographystyle{IEEEtran}
\bibliography{IEEEabrv,reference}

\begin{thebibliography}{10}
\providecommand{\url}[1]{#1}
\csname url@samestyle\endcsname
\providecommand{\newblock}{\relax}
\providecommand{\bibinfo}[2]{#2}
\providecommand{\BIBentrySTDinterwordspacing}{\spaceskip=0pt\relax}
\providecommand{\BIBentryALTinterwordstretchfactor}{4}
\providecommand{\BIBentryALTinterwordspacing}{\spaceskip=\fontdimen2\font plus
\BIBentryALTinterwordstretchfactor\fontdimen3\font minus
  \fontdimen4\font\relax}
\providecommand{\BIBforeignlanguage}[2]{{%
\expandafter\ifx\csname l@#1\endcsname\relax
\typeout{** WARNING: IEEEtran.bst: No hyphenation pattern has been}%
\typeout{** loaded for the language `#1'. Using the pattern for}%
\typeout{** the default language instead.}%
\else
\language=\csname l@#1\endcsname
\fi
#2}}
\providecommand{\BIBdecl}{\relax}
\BIBdecl

\bibitem{C1973Scheduling}
C., L., Liu, James, W., and Layland, ``Scheduling algorithms for
  multiprogramming in a hard-real-time environment,'' \emph{Journal of the
  Acm}, 1973.

\bibitem{li2013outstanding}
J.~Li, K.~Agrawal, and et.al, ``Analysis of global edf for parallel tasks,'' in
  \emph{ECRTS, 2013}.

\bibitem{bonifaci2013feasibility}
V.~Bonifaci, A.~MS, and et.al, ``Feasibility analysis in the sporadic dag task
  model,'' in \emph{ECRTS, 2013}.

\bibitem{melani2017schedulability}
A.~Melani, M.~Bertogna, and et.al, ``Schedulability analysis of conditional
  parallel task graphs in multicore systems,'' \emph{IEEE Trans on Computers},
  2017.

\bibitem{baruah2014improved}
S.~Baruah, ``Improved multiprocessor global schedulability analysis of sporadic
  dag task systems,'' in \emph{ECRTS, 2014}.

\bibitem{baruah2012generalized}
S.~Baruah, V.~Bonifaci, and et.al, ``A generalized parallel task model for
  recurrent real-time processes,'' in \emph{RTSS, 2012}.

\bibitem{chwa2013global}
H.~Chwa, J.~Lee, and et.al, ``Global edf schedulability analysis for
  synchronous parallel tasks on multicore platforms,'' in \emph{ECRTS, 2013}.

\bibitem{jiang2016decomposition}
X.~Jiang, X.~Long, and et.al, ``On the decomposition-based global edf
  scheduling of parallel real-time tasks,'' in \emph{RTSS, 2016}.

\bibitem{jiang2017}
X.~Jiang, N.~Guan, X.~Long, and W.~Yi, ``Semi-federated scheduling of parallel
  real-time tasks on multiprocessors,'' in \emph{2017 IEEE Real-Time Systems
  Symposium (RTSS)}, Dec 2017, pp. 80--91.

\bibitem{sun2017scheduling}
J.~Sun, N.~Guan, and et.al, ``Scheduling and analysis of real-time openmp task
  systems with tied tasks,'' in \emph{RTSS}, 2017.

\bibitem{li2014analysis}
J.~Li, J.~Chen, and et.al, ``Analysis of federated and global scheduling for
  parallel real-time tasks,'' in \emph{ECRTS, 2014}.

\bibitem{xuutilizationtensity}
X.~{Jiang}, J.~{Sun}, Y.~{Tang}, and N.~{Guan}, ``Utilization-tensity bound for
  real-time dag tasks under global edf scheduling,'' \emph{IEEE Transactions on
  Computers}, vol.~69, no.~1, pp. 39--50, 2020.

\bibitem{Bertogna2005New}
M.~Bertogna, M.~Cirinei, and G.~Lipari, ``New schedulability tests for
  real-time task sets scheduled by deadline monotonic on multiprocessors,''
  2005.

\bibitem{chen2014capacity}
J.-J. Chen and K.~Agrawal, ``Capacity augmentation bounds for parallel dag
  tasks under g-edf and g-rm,'' Technical Report 845, Faculty for Informatik at
  TU Dortmund, Tech. Rep., 2014.

\bibitem{cordeiro2010random}
D.~Cordeiro, G.~Mouni{\'e}, and et.al, ``Random graph generation for scheduling
  simulations,'' in \emph{ICST}, 2010.

\bibitem{saifullah2014parallel}
A.~Saifullah, D.~Ferry, and et.al, ``Parallel real-time scheduling of dags,''
  \emph{IEEE Trans on PDS}, 2014.

\bibitem{li2017mixed}
J.~Li, D.~Ferry, and et.al, ``Mixed-criticality federated scheduling for
  parallel real-time tasks,'' \emph{RTS}, 2017.

\bibitem{chen2016federated}
J.~Chen, ``Federated scheduling admits no constant speedup factors for
  constrained-deadline dag task systems,'' \emph{RTS}, 2016.

\bibitem{baruah2015federated}
S.~Baruah, ``The federated scheduling of constrained-deadline sporadic dag task
  systems,'' in \emph{DATE}, 2015.

\bibitem{baruah2015federated1}
------, ``Federated scheduling of sporadic dag task systems,'' in \emph{IPDPS,
  2015}.

\bibitem{lakshmanan2010scheduling}
K.~Lakshmanan, S.~Kato, and et.al, ``Scheduling parallel real-time tasks on
  multi-core processors,'' in \emph{RTSS, 2010}.

\bibitem{saifullah2013multi}
A.~Saifullah, J.~Li, and et.al, ``Multi-core real-time scheduling for
  generalized parallel task models,'' \emph{RTS}, 2013.

\bibitem{melani2015response}
A.~Melani, M.~Bertogna, and et.al, ``Response-time analysis of conditional dag
  tasks in multiprocessor systems,'' in \emph{ECRTS, 2015}.

\bibitem{Parri2015Response}
A.~Parri, A.~Biondi, and M.~Marinoni, ``Response time analysis for g-edf and
  g-dm scheduling of sporadic dag-tasks with arbitrary deadline,'' 2015.

\end{thebibliography}

\end{document}